\newcommand\org@hypertarget{}
\let\org@hypertarget\hypertarget
\renewcommand\hypertarget[2]{%
  \Hy@raisedlink{\org@hypertarget{#1}{}}#2%
  }
\definecolor{darkgreen}{RGB}{50,190,50}
\definecolor{darkblue}{RGB}{0,0,190}
\definecolor{darkred}{RGB}{238,0,0}
\definecolor{darkpurple}{rgb}{0.4, 0.2, 0.5}
\newcommand{\ket}[1]{\ensuremath{\left|\right.\!{#1}\!\left.\right\rangle}}
\newcommand{\id}{\mathds{1}}
\DeclareMathOperator{\Tr}{Tr}
\newcommand{\djj}{d\kern-0.4em\char"16\kern-0.1em}
\newtheorem{theorem}{Theorem}
\newtheorem*{lemma*}{Lemma}
\renewcommand{\p@subsection}{}
\renewcommand{\p@subsubsection}{}
\begin{document}
\title{Correlation constraints and the Bloch geometry of two qubits}

\author{Simon Morelli}
\email{smorelli@bcamath.org}
\affiliation{BCAM - Basque Center for Applied Mathematics,
Mazarredo 14, E48009 Bilbao, Basque Country - Spain}
\author{Christopher Eltschka}
\affiliation{Institut f\"ur Theoretische Physik, Universit\"at Regensburg, D-93040 Regensburg, Germany}
\author{Marcus Huber}
\affiliation{Atominstitut,  Technische  Universit{\"a}t  Wien,  1020  Vienna,  Austria}
\affiliation{Institute for Quantum Optics and Quantum Information - IQOQI Vienna, Austrian Academy of Sciences, Boltzmanngasse 3, 1090 Vienna, Austria}

\author{Jens Siewert}
\email{jens.siewert@ehu.eus}
\affiliation{
Department of Physical Chemistry and EHU Quantum Center, University of the Basque Country UPV/EHU,
E-48080 Bilbao, Spain}
\affiliation{IKERBASQUE Basque Foundation for Science, E-48013 Bilbao, Spain}

\date{\today}

\begin{abstract}
    We present a novel inequality on the purity of a bipartite state depending solely on the length-difference of the local Bloch vectors. For two qubits this inequality is tight for all marginal states and so extends the previously known solution for the 2-qubit marginal problem.
    With this inequality we construct a 3-dimensional Bloch model of the 2-qubit quantum state space in terms of Bloch lengths, providing a pleasing visualization of this high-dimensional state space.
    This allows to characterize quantum states by a strongly reduced set of parameters and to investigate the interplay between local properties of the marginal systems and global properties encoded in the correlations.
    
\end{abstract}
\maketitle


\section{Introduction}
The investigation of fundamental bounds in the form of entropy inequalities, limiting the distribution of information within a multipartite system, marked the birth of information theory as a field of study~\cite{Shannon48}.
In the same way, the exploration of fundamental limitations on the distribution of quantum information within a system lies at the core of the rapidly evolving field of quantum information theory.
Analogous results to classical information theory were found for the von~Neumann entropy~\cite{Araki70,LiebRusskai73,Pippenger03}, but at the same time there remain many open questions~\cite{LindenWinter05,CadneyLindenWinter12}.

Moreover, these are not the only known constraints to the distribution of quantum information within a system. Monogamy relations bound the shared entanglement that a party can have with other parties and thus ultimately limit the set of attainable quantum states~\cite{Coffman_2000,OsborneVerstraete2006,Eltschka2009,Bai2014,Lancien2016,Adesso2016,Meyer2017,Camalet2017,Gour2018,EltschkaSiewert2018,EltschkaHuberGuhneSiewert18}.

The classical probability space is described geometrically by a simplex of the corresponding dimension.
While the latter geometrically is fairly regular, the quantum state space becomes a highly complex object that admits no simple description~\cite{Bengtsson_2006,Bengtsson_2012}.
Ultimately the characterization of attainable quantum states is linked to the famous quantum marginal problem, investigating the compatibility between local and global eigenvalues of a state~\cite{Klyachko2006,Yu_2021}.

In this work we present a novel constraint of the purity of any bipartite quantum state, based solely on the difference of the local Bloch lengths, i.e., the Hilbert-Schmidt distance of the marginal states to the normalized identity.
For two qubits this constraint is tight for all possible marginal states, thus extending the previously known solution for the marginal problem for two qubits~\cite{Bravyi2003}. 
The bound can be interpreted as a new constraint on the linear entropy of a bipartite system, as the linear entropy has a one-to one correspondence to the Bloch lengths, and also be extended to a constraint of the local Bloch lengths of pure tripartite states.

We then use this novel inequality to construct a three dimensional model for the state space of two qubits, where we use the two local Bloch lengths and the correlation tensor length as the coordinates.
This allows to characterize important properties of states based on a very reduced set of parameters.
In the model we further identify regions of special interest, such as purely entangled and purely separable regions. It is worthwhile mentioning here that all our discussion and results are based on the lengths of Bloch vectors, which are directly accessible in experiments and, because of their local unitary invariance, are particularly interesting in the context of the arising field of randomized measurements~\cite{Tran2016,Knips2020,Elben2022}.


\section{Quantum state space and the Bloch representation}

In what is known as the Bloch representation, density operators of qubits are parametrized by the Pauli matrices $\sigma_x$, $\sigma_y$, $\sigma_z$ and the identity $\id_2$
\begin{align}
     \rho\ =\ \frac{1}{2}(\id_2+x\ \sigma_x + y\ \sigma_y +z\ \sigma_z ),
\label{eq:bloch2}
\end{align}
where $x$, $y$ and $z$ are real numbers satisfying $x^2+y^2+z^2\le 1$~\cite{Fano1954,Fano1957}.
Hence, the state space $\mathcal{Q}_2$ of a qubit can be identified with a three-dimensional ball of radius 1 about the origin, with a one to one correspondence between Bloch vectors $\mathbf{v}=(x,y,z)$ and states $\rho$.

Unfortunately, the Bloch ball falls short of giving an extensive comprehension of the state space of quantum systems, not because of shortcomings of the geometric model but because the system of a single qubit itself is too elementary to exhibit all the complex features of a general quantum system.
Looking at higher-dimensional systems gives a richer and more diverse structure of the quantum state-space and its geometrical representation, e.g., Refs.~\cite{Bengtsson_2006,Bloore_1976,Kimura_2003,Kurzynski_2009,Sarbicki_2012,Bengtsson_2012,Goyal_2016,Sharma_2021,rau2021symmetries,Eltschka_2021}.
The Bloch representation exists also for higher-dimensional systems. We call a basis $\{\mu_i\}_{i=0}^{d^2-1}$ satisfying $\mu_0=\id_d$ and 
$\Tr(\mu_i\mu_j^{\dagger})=d\delta_{ij}$ a Bloch basis~\cite{BertlmannKrammer08}. Every state of a $d$-dimensional system can be expanded in such a basis as
\begin{align}
     \rho\ =\ \frac{1}{d}(\id_d+\sum\limits_{i=1}^{d^2-1} v_i \mu_i),
\label{eq:blochd}
\end{align}
where we do not include $v_0=1$ in the definition of the Bloch vector $\mathbf{v}$.
While the Bloch vector itself clearly depends on the Bloch basis, the length of the Bloch vector $\|\mathbf{v}\|_2=\sqrt{d\Tr(\rho^2)-1}$ is independent of the choice of Bloch basis as well as state basis, as this quantity is unitarily invariant. To reflect the fact that no basis choice with corresponding Bloch vector is needed to define this quantity, we will refer to it simply as Bloch length.
From now on, if not stated otherwise, all norms of vectors will be the 2-norm and we omit the subscript.

As interesting and complex as these single systems are, they still lack an essential property: they are unable to display correlations~\cite{footnote1} between different parties of a more complex quantum system. Steps in that direction were taken in Ref.~\cite{Gamel_2016,Wie_2020,zyczkowski2022quantum}.
One can extend local Bloch bases to a product basis of a multipartite system, retaining the party structure of the system.
Let $\{\mu_i^A\}$ and $\{\mu_j^B\}$ be Bloch bases for the systems $A$ and $B$ of equal dimension $d$. Then $\{\mu_i^A\otimes\mu_j^B\}$ is a Bloch basis for the joint system and every state can be expressed as
\begin{equation}
	\rho_{AB} = \frac{1}{d^2} \big(\mathds{1}_{d^2} + \sum_{i} a_{i} \mu_i^A\otimes\mathds{1}_d + \sum_{j} b_{j} \mathds{1}_d\otimes\mu_j^B + \sum_{i,j} t_{ij} \mu_i^A\otimes\mu_j^B \big).
\end{equation}
The local Bloch vectors $\mathbf{a}$ and $\mathbf{b}$ encode all the information of the local states $\rho_A=\Tr_B(\rho_{AB})$ and $\rho_B=\Tr_A(\rho_{AB})$ respectively. Correlations~\cite{footnote1} are encoded in the correlation tensor $\mathbf{T}_{AB}$.
These concepts can be straightforwardly generalized for multipartite systems. Here and in the following we consider only systems of equal local dimensions, but all concepts are also well defined for systems of arbitrary local dimensions.


\section{Purity constraint from sector-difference}
An immediate consequence of the Schmidt decomposition is that for every pure bipartite quantum state the marginal states are equal up to local unitaries.
This means that if the eigenvalues of the marginals are equal, there exists no non-trivial constraint for the purity of the global state.
However, if the marginals have different eigenvalues no pure state is compatible with those constraints. Therefore there exists a purity constraint based on the difference of the local states.
The following theorem gives an upper bound on the purity of a bipartite quantum state based solely on the difference of the local Bloch lengths. For qubits this bound is tight.

\begin{theorem}\label{theorem_purity}
For any bipartite state $\rho_{AB}$ with equal local dimension $d$ it holds that
    \begin{align}
        d\Tr(\rho^2_{AB})\le d-\sqrt{2d}\big|\|\mathbf{a}\|-\|\mathbf{b}\|\big|+\big|\|\mathbf{a}\|-\|\mathbf{b}\|\big|^2,\label{eq:trace_bound}
    \end{align}
    where $\mathbf{a}$ and $\mathbf{b}$ are the local Bloch vectors of $\rho_A=\Tr_B(\rho_{AB})$ and $\rho_B=\Tr_A(\rho_{AB})$, respectively,
    and $\|\cdot\|$ denotes the 2-norm.
\end{theorem}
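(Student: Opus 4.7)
The plan is to translate the bound into the Bloch picture and exploit positive semidefiniteness of $\rho_{AB}$ via an operator Cauchy-Schwarz inequality. Using the product Bloch basis $\{\mu_i^A\otimes\mu_j^B\}$ and the orthonormality $\Tr(\mu_i\mu_j)=d\delta_{ij}$, one has the identity $d\Tr(\rho_{AB}^2)=(1+\|\mathbf{a}\|^2+\|\mathbf{b}\|^2+\|\mathbf{T}\|^2)/d$, so the claimed bound is equivalent to an upper bound on $\|\mathbf{T}\|^2$ in terms of $\|\mathbf{a}\|$ and $\|\mathbf{b}\|$ alone. Since both sides are invariant under local unitaries, I would first use this freedom to put the local Bloch vectors in a convenient canonical form, e.g.\ aligned along a common direction in the two Bloch spaces.

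Next, I would purify $\rho_{AB}$ to a pure tripartite state $|\psi\rangle_{ABC}$: the Schmidt decomposition with respect to the bipartition $AB|C$ gives $\Tr(\rho_{AB}^2)=\Tr(\rho_C^2)$, reducing the task to an inequality relating the purity of one marginal of a pure tripartite state to the Bloch lengths of the other two marginals. The equalities of bipartite Schmidt spectra $\Tr(\rho_A^2)=\Tr(\rho_{BC}^2)$ and $\Tr(\rho_B^2)=\Tr(\rho_{AC}^2)$ then provide the structural constraints to be exploited; in particular, pure $\rho_{AB}$ forces $\|\mathbf{a}\|=\|\mathbf{b}\|$, which is the case of equality.

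The core of the argument is an operator Cauchy-Schwarz inequality in the Hilbert-Schmidt inner product. A natural starting candidate is the non-negativity of $(\sqrt{\rho_A}\otimes\mathds{1}-\mathds{1}\otimes\sqrt{\rho_B})^\dagger(\sqrt{\rho_A}\otimes\mathds{1}-\mathds{1}\otimes\sqrt{\rho_B})$, combined with the contractivity $\|\Tr_B(X)\|_F\le\sqrt{d}\,\|X\|_F$ of the partial trace (which alone already gives $\|\mathbf{a}\|^2\le d^2\Tr(\rho_{AB}^2)-1$ and its analog for $\mathbf{b}$). The main obstacle, which I expect to dominate the work, is matching the precise coefficient $\sqrt{2d}$ of the linear term: this strongly suggests that the correct test operator is a specific quadratic combination of $\rho_A\otimes\mathds{1}$, $\mathds{1}\otimes\rho_B$, and $\rho_{AB}$ with carefully tuned weights, which must be identified either by Lagrangian optimization over the compatible marginals or by reverse-engineering from the equality case.

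Finally, tightness in the qubit case ($d=2$) can be verified by direct construction: the state $\rho_{AB}=\rho_A\otimes\mathds{1}_2/2$ with pure $\rho_A$ saturates the bound, giving $\|\mathbf{a}\|=1$, $\|\mathbf{b}\|=0$, and $2\Tr(\rho_{AB}^2)=1$, in agreement with the right-hand side $2-2+1=1$; analogous convex combinations with pure entangled states should cover the remaining admissible pairs $(\|\mathbf{a}\|,\|\mathbf{b}\|)$ and confirm tightness throughout the qubit state space.
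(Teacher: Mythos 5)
There is a genuine gap: your proposal never actually proves the inequality. The setup steps (rewriting $d\Tr(\rho_{AB}^2)$ via $d^2\Tr(\rho_{AB}^2)=1+\|\mathbf{a}\|^2+\|\mathbf{b}\|^2+\|\mathbf{T}\|^2$, invoking local-unitary invariance, purifying to $ABC$) are all legitimate, but at the decisive moment you write that the correct test operator ``must be identified either by Lagrangian optimization \ldots or by reverse-engineering from the equality case.'' That is precisely the content of the theorem, and the candidates you do name cannot supply it: the contractivity $\|\Tr_B(X)\|_F\le\sqrt{d}\,\|X\|_F$ yields only $\|\mathbf{a}\|^2\le\|\mathbf{a}\|^2+\|\mathbf{b}\|^2+\|\mathbf{T}\|^2$, which is vacuous, and the non-negativity of $(\sqrt{\rho_A}\otimes\mathds{1}-\mathds{1}\otimes\sqrt{\rho_B})^\dagger(\cdots)$ gives a bound involving $\Tr\bigl(\rho_{AB}(\sqrt{\rho_A}\otimes\sqrt{\rho_B})\bigr)$ with no visible route to the linear term $\sqrt{2d}\,\bigl|\|\mathbf{a}\|-\|\mathbf{b}\|\bigr|$. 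The purification detour is also not needed for the theorem itself (the paper uses $\Tr(\rho_{AB}^2)=\Tr(\rho_C^2)$ only for the tripartite corollary), and your tightness check covers a single point $(\|\mathbf{a}\|,\|\mathbf{b}\|)=(1,0)$ rather than the full range.

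The idea you are missing is an eigenvalue-decomposition argument by contradiction. Set $\Delta=\bigl|\|\mathbf{a}\|-\|\mathbf{b}\|\bigr|$ and note that a rank-two state with spectrum $\{\lambda,1-\lambda\}$, $\lambda=1-\Delta/\sqrt{2d}$, meets the bound with equality. If some $\rho$ violated the bound, then since $\Tr(\rho^2)\le\lambda_{\max}^2+(1-\lambda_{\max})^2$ its largest eigenvalue would have to satisfy $1-\lambda_{\max}<\Delta/\sqrt{2d}$. Writing $\rho=\lambda_{\max}\ketbra{\psi}{\psi}+(1-\lambda_{\max})\sigma$ and working in the Schmidt basis of $\ket{\psi}$, the marginals of $\ket{\psi}$ coincide, so $\rho_A-\rho_B=(1-\lambda_{\max})(\sigma_A-\sigma_B)$. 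The reverse triangle inequality for the Hilbert--Schmidt norm then gives $\Delta/\sqrt{2d}\le\tfrac{1}{\sqrt{2}}\|\rho_A-\rho_B\|_{\mathrm{HS}}=\tfrac{1-\lambda_{\max}}{\sqrt{2}}\|\sigma_A-\sigma_B\|_{\mathrm{HS}}\le 1-\lambda_{\max}$, contradicting $1-\lambda_{\max}<\Delta/\sqrt{2d}$. The point is that a large purity forces a dominant eigenvector, whose equal marginals pin $\rho_A$ and $\rho_B$ close together and hence force $\Delta$ to be small; no tuned operator Cauchy--Schwarz inequality is required.
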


\begin{proof}
Denote by $\Delta=\big|\|\mathbf{a}\|-\|\mathbf{b}\|\big|$.
Note that any state with eigenvalues $\lambda=1-\Delta/\sqrt{2d}$, $1-\lambda=\Delta/\sqrt{2d}$ and $\lambda_j=0$ ($j>2$) satisfies the inequality.
Assume now that there exist a state $\rho$ such that $d\Tr(\rho^2)>d-\sqrt{2d}\Delta+\Delta^2$.
Its purity $\Tr(\rho^2)\le\lambda^2+(1-\lambda)^2$ can only exceed the bound of Eq.~(\ref{eq:trace_bound}) if its leading eigenvalue satisfies $\lambda>1-\Delta/\sqrt{2d}$.
Therefore it must hold that $1-\lambda<\Delta/\sqrt{2d}$.
Any state can be written as
\begin{align}
    \rho=\lambda|\psi\rangle\langle\psi|+(1-\lambda)\sigma,
\end{align}
where $|\psi\rangle$ is the eigenvector of the eigenvalue $\lambda$ and $\sigma$ the sum of the projectors of the remaining eigenvectors in the orthogonal subspace.
Since all quantities in Eq.~\eqref{eq:trace_bound} are invariant under local unitary operations, we can choose any local basis to prove the statement.
Writing this state in the Schmidt basis of $|\psi\rangle$, we find
\begin{align}
    \rho_A&=\lambda\rho_\psi+(1-\lambda)\sigma_A\\
    \rho_B&=\lambda\rho_\psi+(1-\lambda)\sigma_B,
\end{align}
with equal marginals $\rho_\psi=\Tr_B(|\psi\rangle\langle\psi|)=\Tr_A(|\psi\rangle\langle\psi|)$.
Now it follows
\begin{align}
    1-\lambda&<\frac{1}{\sqrt{2d}}\big|\|\mathbf{a}\|-\|\mathbf{b}\|\big|\\
    &=\tfrac{1}{\sqrt{2}d}\big|\|d\rho_A-\mathds{1}\|_{\mathrm{HS}}-\|d\rho_B-\mathds{1}\|_{\mathrm{HS}}\big|\\
    &\le\tfrac{1}{\sqrt{2}}\|\rho_A-\rho_B\|_{\mathrm{HS}}\\
    &=\tfrac{1-\lambda}{\sqrt{2}}\|\sigma_A-\sigma_B\|_{\mathrm{HS}}\\
    &\le 1-\lambda,
\end{align}
where $\|M\|_{\mathrm{HS}}=\sqrt{\Tr(M^\dagger M)}$ denotes the Hilbert-Schmidt or Frobenius norm.
For the last inequality we have used $\|\sigma_A-\sigma_B\|_{\mathrm{HS}}\le\sqrt{2}$, which follows immediately from the positivity of the matrices.
But this is a contradiction, therefore the assumption that there exists a state $\rho$ violating the inequality is wrong.
\end{proof}

\begin{figure}[t]
  \centering
  \includegraphics[width=1\linewidth]{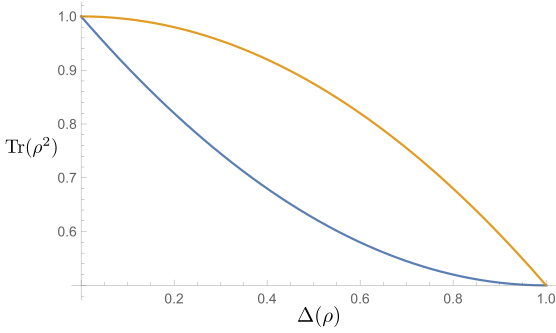}
  \caption{The maximal value of $\Tr(\rho^2)$ plotted against the difference $\Delta(\rho)= \big|\|\mathbf{a}\|-\|\mathbf{b}\|\big|$ for d=2. The bound of Eq.~(\ref{eq:trace_bound}) is shown in blue and the bound following from the triangle inequality of the linear entropy in orange.
    }
  \label{fig:trace_bound}
\end{figure}

This bound improves the previously known bound following from the triangle inequality of the linear entropy, see Fig.~\ref{fig:trace_bound}.
Remarkable is the fact that for qubits this inequality is not only tight with respect to the difference $\big|\|\mathbf{a}\|-\|\mathbf{b}\|\big|$, but actually for all values of $\|\mathbf{a}\|$ and $\|\mathbf{b}\|$, see Eq.~(\ref{eq:boundary}) for a family of states that saturate the inequality.
But the local Bloch length describes the state of the marginal qubits uniquely up to local unitaries, which clearly do not change any of the quantities in the relation. This means, that for every two marginal states, there exists a global state compatible with them that saturates the inequality.
So we give at the same time a more extensive answer to the pure two-qubit marginal problem and a physically meaningful interpretation of the solution of the general two-qubit marginal problem~\cite{Bravyi2003}. We are able to exactly quantify the maximum purity of any global state compatible with any two given marginals (or in fact only the difference of their local Bloch lengths).

A direct consequence of this inequality is that for any pure tripartite state of equal local dimension it holds that the three local Bloch lengths satisfy
\begin{align}\label{eq:pure_qudit_constraint}
    \|\mathbf{c}\|^2\le d-1-\sqrt{2d}\big|\|\mathbf{a}\|-\|\mathbf{b}\|\big|+\big|\|\mathbf{a}\|-\|\mathbf{b}\|\big|^2.
\end{align}
This can be seen by noting that for a pure three-party state it holds that $\Tr(\rho_{AB}^2)=\Tr(\rho_{C}^2)$ and inserting this into Eq.~(\ref{eq:trace_bound}).

For qubits this relation can be simplified to the known inequality
\begin{align}
    \|\mathbf{a}\|+\|\mathbf{c}\|\le1+\|\mathbf{b}\|,
\end{align}
following directly from the solution of the $n$-qubit marginal problem~\cite{Higuchi_2003,Cheng_2017}.
It is saturated by states of the form
\begin{align}
    \ket{\psi}=\sqrt{\tfrac{x-y}{2}}\ket{001}+\sqrt{\tfrac{1+y}{2}}\ket{010}+\sqrt{\tfrac{1-x}{2}}\ket{100}.
\end{align}

While conceptually different, this constraint can be compared to the famous Coffman-Kundu-Wootters  relation~(CKW)~\cite{Coffman_2000}. The states that saturate this relation for three qubits form a strict subset of the states saturating CKW. This means that CKW is sharp in a larger region than this novel constraint, thus proving it to be weaker than CKW. At the same time Eq.~(\ref{eq:pure_qudit_constraint}) holds in all finite dimensions, opening an exciting new avenue for the study of high-dimensional multipartite quantum systems.


\section{A three-dimensional model for two qubits}

By taking tensor products of the Pauli matrices, this matrix-basis for the density operators of a single qubit can be extended to a product basis of the composite space of multiple qubits.
That is, every two-qubit state can be expanded in the product Pauli basis 
\begin{equation}
	\rho_{AB} = \frac{1}{4} \big(\mathds{1}_4 + \sum_{i} a_{i} \sigma_i\otimes\mathds{1}_2 + \sum_{j} b_{j} \mathds{1}_2\otimes\sigma_j + \sum_{i,j} t_{ij} \sigma_i\otimes\sigma_j \big),
\end{equation}
where $i,j=x,y,z$ and $a_{i} = \langle\sigma_i\otimes\mathds{1}_2\rangle\equiv\Tr\left(\rho\ \sigma_i\otimes\mathds{1}_2\right)$, $b_{j} = \langle\mathds{1}_2\otimes\sigma_j\rangle$ and $t_{ij}=\langle \sigma_i\otimes\sigma_j \rangle$.
The state space of a two-qubit system is now characterized by 15 real parameters and can thus be identified with a region in $\mathbb{R}^{15}$.
While the condition $\sum_{i} a_i^2+\sum_{j} b_j^2+\sum_{i,j} t_{ij}^2\le 3$ is still necessary to describe a state, it is not sufficient any longer, see Ref.~\cite{Englert_2001,Gamel_2016} for necessary and sufficient criteria.
While points of this object are bijectively mapped to quantum states, the high dimensionality makes it sometimes difficult to work with and limits its usefulness for visualizations.
Moreover, a full description of the quantum state is often not desired, rather than general constraints on physically relevant quantities, such as purities and entropies.
In the following we focus on a more restricted set of parameters to construct a lower-dimensional model of the quantum state space of a two-qubit system.
We compress the local information of each subsystem into a single coordinate. The state of a qubit is characterized uniquely up to local unitaries by its purity, or alternatively, the length of its Bloch vector.
With this choice we obtain our first two coordinates
\begin{align}
    x=\|\mathbf{a}\|:=\sqrt{\sum_{i}a_i^2},\quad y=\|\mathbf{b}\|:=\sqrt{\sum_{j}b_j^2},
\end{align}
as the local Bloch lengths, i.e., the length of the Bloch vector of the reduced state in the corresponding subsystem.
Finally, the last coordinate describes the strength of (classical and quantum) correlation between the local states.
As the third coordinate we choose 
\begin{align}
    z=\|\mathbf{T}_{AB}\|:=\sqrt{\sum_{i,j}t_{ij}^2},
\end{align}
the remaining length of the global Bloch vector not encoded in the local Bloch vectors, which we call the correlation tensor length $\|\mathbf{T}_{AB}\|$.

With these reduced coordinates we can describe the set of points $(\|\mathbf{a}\|,\|\mathbf{b}\|,\|\mathbf{T}_{AB}\|)$ attainable by quantum states.
Besides positivity of the entries, the set is completely characterized by the inequalities
\begin{align}
    \|\mathbf{T}_{AB}\|\ge&\ \|\mathbf{a}\|+\|\mathbf{b}\|-1\label{eq:lowerbound2}\\
    \|\mathbf{T}_{AB}\|^2\le&\ 3+\|\mathbf{a}\|^2+\|\mathbf{b}\|^2-4\|\mathbf{a}\|\|\mathbf{b}\|-4\big|\|\mathbf{a}\|-\|\mathbf{b}\|\big|,\label{eq:upperbound2}
\end{align}
in the sense that all quantum states are mapped into the set defined by the inequalities and for every point satisfying both inequalities there exists at least one state that is mapped to it.

\begin{figure}[t]
  \centering
  \includegraphics[width=1\linewidth]{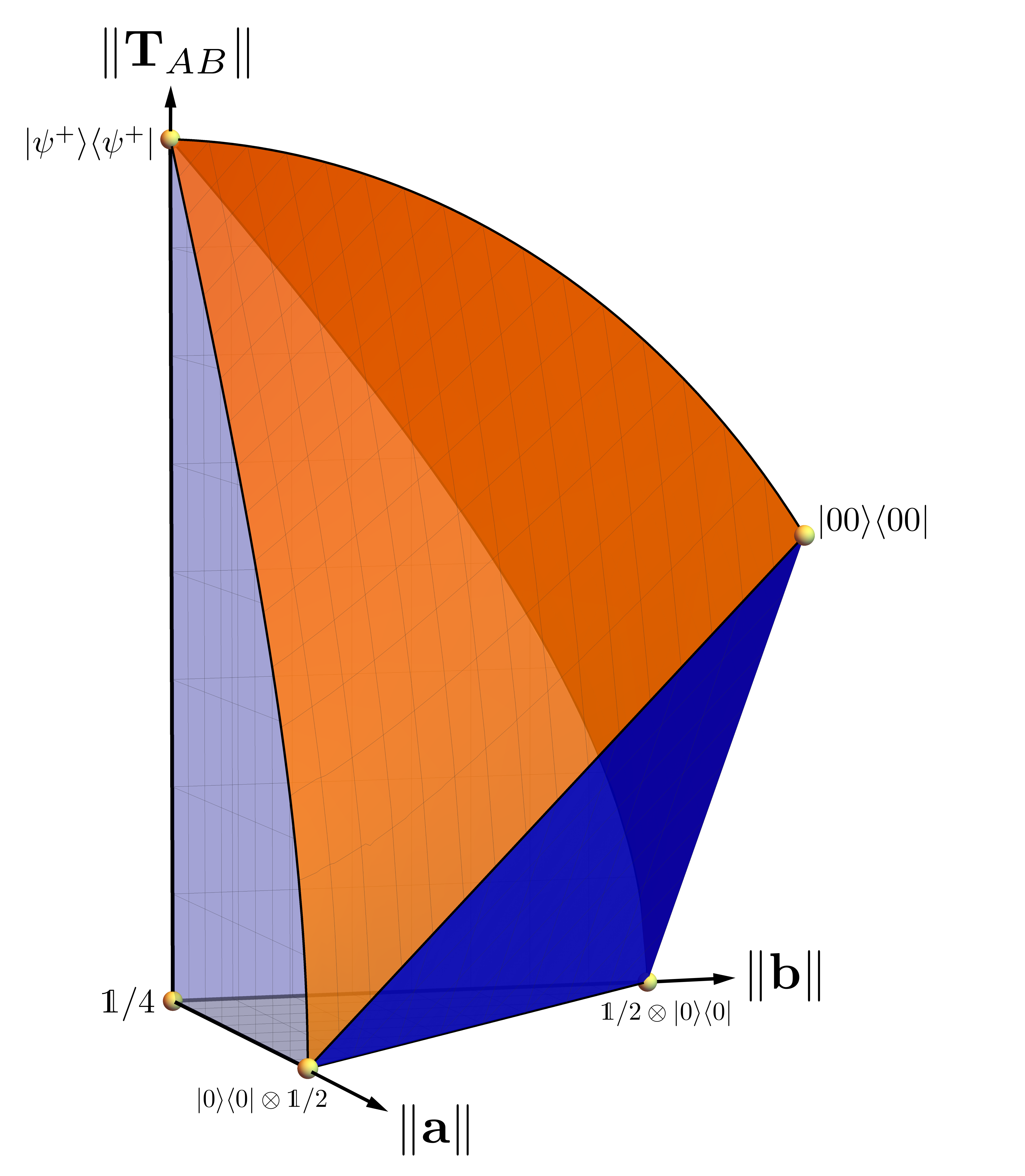}
  \caption{The model of the  Bloch body $\mathcal{Q}_{2\times2}$, depending on the local Bloch lengths $\|\mathbf{a}\|$, $\|\mathbf{b}\|$ and the correlation tensor length $\|\mathbf{T}_{AB}\|$. It is restricted from below by Eq.~(\ref{eq:lowerbound2}) in blue and from above by Eq.~(\ref{eq:upperbound2}) in orange.
    }
  \label{fig:bipartite_body}
\end{figure}

Eq.~(\ref{eq:lowerbound2}) was already proven in~Ref.~\cite{Morelli20} and Eq.~(\ref{eq:upperbound2}) follows directly from Eq.~(\ref{eq:trace_bound}) using the identity $d^2\Tr(\rho_{AB}^2)=1+\|\mathbf{a}\|^2+\|\mathbf{b}\|^2+\|\mathbf{T}_{AB}\|^2$.
Eq.~(\ref{eq:lowerbound2}) cuts out the lower corner of the Bloch model, depicted in blue in Fig.~\ref{fig:bipartite_body}. It is saturated by states of the form
\begin{equation}
    \rho_{\text{lb}}(p,q)=p\ |0\rangle\langle0|\otimes\mathds{1}/2+q\ \mathds{1}/2\otimes|0\rangle\langle0|+(1-p-q)|00\rangle\langle00|
\end{equation}
see Ref.~\cite{Morelli20} for more details.
Eq.~(\ref{eq:upperbound2}) gives an upper bound, shown in orange in Fig.~\ref{fig:bipartite_body}. It is saturated by states of the form
\begin{align}\label{eq:boundary}
    \rho_{\text{ub}}(p,q)=p|\phi(q)\rangle\langle\phi(q)|+(1-p)|ij\rangle\langle ij|,
\end{align}

where $|\phi(q)\rangle=\sqrt{q}|00\rangle+\sqrt{1-q}|11\rangle$ and $i\neq j\in\{0,1\}$.
It follows that $\|\mathbf{a}\|=1-2pj-2pq$, $\|\mathbf{b}\|=1-2pi-2pq$ and $\big|\|\mathbf{a}\|-\|\mathbf{b}\|\big|=2p$. Since $\Tr(\rho_{\text{ub}}(p,q)^2)=1-2p+2p^2$, Eq.~(\ref{eq:trace_bound}), as well as Eq.~(\ref{eq:upperbound2}), are saturated.
The coordinates are length preserving, the length of the global Bloch vector is $\sqrt{\|\mathbf{a}\|^2+\|\mathbf{b}\|^2+\|\mathbf{T}_{AB}\|^2}$, so the distance from the origin in our model remains the Bloch length.
The resulting shape is convex with flat and curved surfaces. 
Since for pure states $\|\mathbf{a}\|=\|\mathbf{b}\|$, all pure states lie on a circular arc of radius $\sqrt{3}$ spanning from $(0,0,\sqrt{3})$, representing maximally entangled states, to $(1,1,1)$, representing product states. This forms the ridge
of the model shown in Fig.~\ref{fig:bipartite_body}.
The maximally mixed state $\mathds{1}_4/4$ is mapped to the origin.
The model has three "artificial" boundaries, since the norms we consider are always non-negative. In this sense, the boundaries where one coordinate is zero are not considered surfaces of the model.

The upper boundary consists of rank-$2$ states, they turn out to be exactly the states maximizing the entanglement for given mixedness of the subsystems discussed later.
The lower boundary consists of states of rank $3$. So all boundaries are occupied only by states of reduced rank. The converse does not hold, i.e., there are states of  rank $2$ and rank $3$ that are mapped to the interior. However, all the states that map to the interior of the ball with radius $1/3$ around the origin are of full rank.

The classical state space of a system with $d$ outcomes is the set of all probability vectors of size $d$ and is described by a simplex $\mathbf{S}_{d-1}$. This set can be identified with the set of diagonal quantum states of the same size.
For a 2-bit classical system, the states, corresponding to diagonal quantum states, form a 3-dimensional subset in the quantum states space in form of the simplex $\mathbf{S}_{3}$, or tetrahedron.
Using a parametrization in the Pauli matrices $(\langle\sigma_z\otimes\mathds{1}_2\rangle,\langle\mathds{1}_2\otimes\sigma_z\rangle,\langle\sigma_z\otimes\sigma_z\rangle)$, this tetrahedron corresponds to the convex hull of the points $(1,1,1)$, $(-1,-1,1)$, $(-1,1,-1)$ and $(1,-1,-1)$.
The intersection with the positive octant is inscribed in our model, described by the convex hull of the points $(1,1,1)$, $(1,0,0)$, $(0,1,0)$, $(0,0,1)$ and $(0,0,0)$.
Looking closely at Eq.~(\ref{eq:trace_bound}) one notices a remarkable fact:
The inequality is saturated everywhere and depends only on the difference of the local Bloch lengths.
But since the trace is invariant under unitary transformations, this means that all boundary states with equal local Bloch length difference have the same distance from the origin.
In fact we notice that the states on the lower edge $\rho_{\text{lb}}(p,0)=p\ |00\rangle\langle00|+(1-p)\ |0\rangle\langle0|\otimes\mathds{1}/2$ form exactly the states on the boundary, if rotated by the entangling unitary 

\begin{align}\label{eq:unitary}
    U(\theta)=&\cos{\theta}(|00\rangle\langle00|+|11\rangle\langle11|)+\sin{\theta}(|11\rangle\langle00|-|00\rangle\langle11|)\notag\\
    &+|01\rangle\langle01|+|10\rangle\langle10|.
\end{align}

Therefore our model is obtained by rotating the classical probability space about the axis $(1,-1,0)$, visualizing the fact that all quantum states can be obtained by an appropriate unitary rotation of a positive and normalized diagonal matrix.


\section{Entangled and separable regions}
One interesting question is to locate entangled states in this picture. The $z$-axis shows the correlation between the two subsystems. This correlation is the sum of the classical and the quantum correlations, see~Ref.~\cite{adesso_arxiv_2016}. Although states mapping to the same point in the model have the same amount of correlation, the respective parts of classical and quantum correlation might differ considerably. Therefore separable and entangled states will not map into two disjoint regions. Nevertheless there can be regions where only entangled or only separable states map into.

\begin{figure}[t]
  \centering
  \includegraphics[width=1\linewidth]{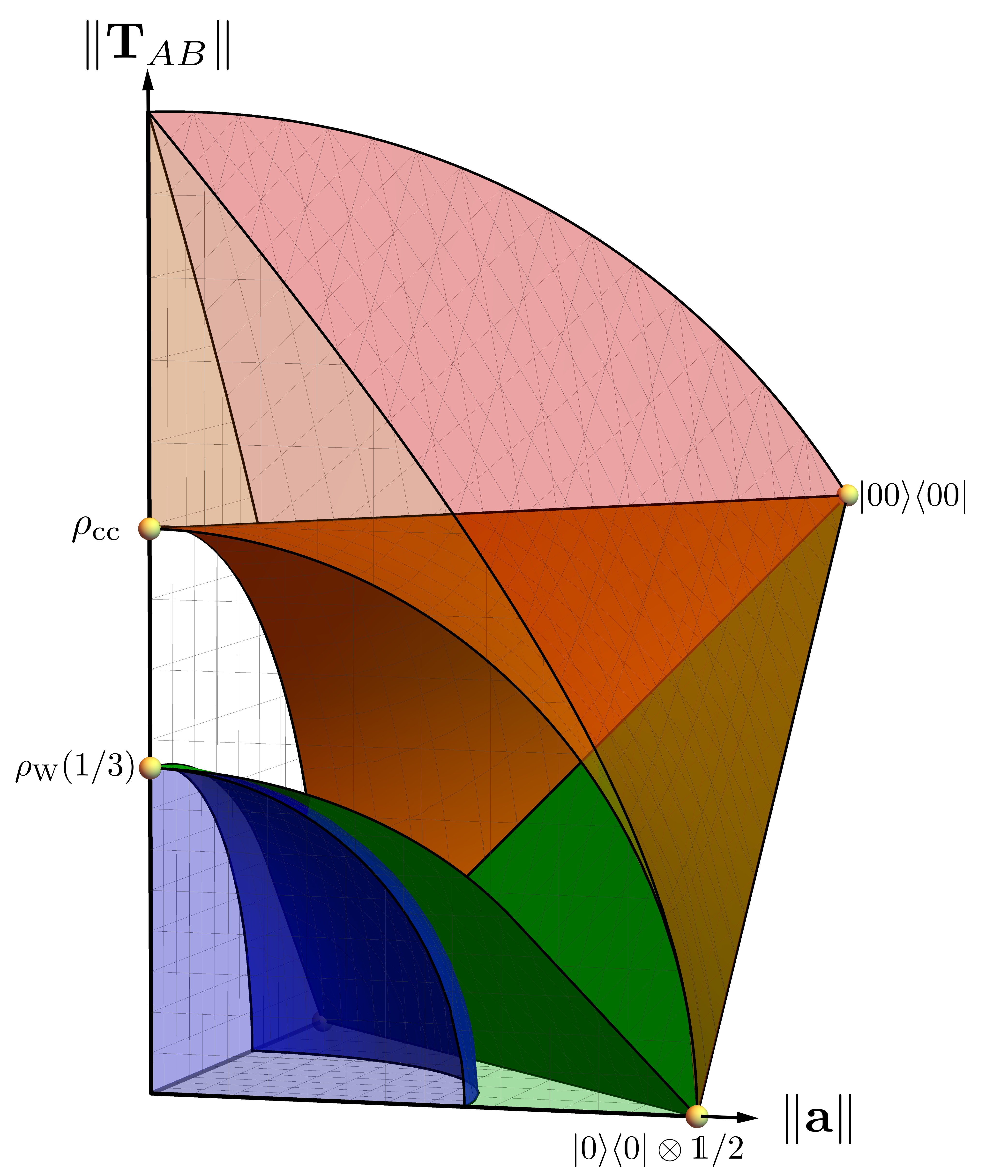}
  \caption{Entangled and separable region in our model. The purely entangled region defined by Eq.~(\ref{eq:cond_entanglement}) is shown in orange, being mapped into this region is already a sufficient condition for entanglement. In blue there is the separable ball of radius $1/\sqrt{3}$, that is extended to the purely separable region defined by Eq.~(\ref{eq:cond_separability}) shown in green.
  For every point in the area in between there exists both an entangled and a separable state mapping to that point. So these are in fact the strongest criteria for separability and entanglement that can be found in this model.}
  \label{fig:entanglement_bound}
\end{figure}

It was shown that the condition $S_L(\rho_{AB})<\max\{S_L(\rho_{A}),S_L(\rho_{B})\}$ on the linear entropy $S_L(\rho)=1-\Tr(\rho^2)$ implies that the state is entangled~\cite{Horodecki_1996}. Using $S_L(\rho_{A/B})=1-1/2(1+\|\mathbf{a/b}\|^2)$ and $S_L(\rho_{AB})=1-1/4(1+\|\mathbf{a}\|^2+\|\mathbf{b}\|^2+\|\mathbf{T}_{AB}\|^2)$ this can be translated into Bloch lengths~\cite{Imai_2021}
\begin{align}\label{eq:cond_entanglement}
    \|\mathbf{T}_{AB}\|^2>1-\big|\|\mathbf{a}\|^2-\|\mathbf{b}\|^2\big|\ \ .
\end{align}
This is a sufficient condition for entanglement, but it is not necessary. For example, the Werner states~\cite{Werner1989} for two qubits,
$\rho_{\text{W}}(p)=p|\phi^+\rangle\langle\phi^+|+(1-p)\mathds{1}/4$ are entangled for every $p>1/3$, or $\|\mathbf{T}_{AB}\|>1/\sqrt{3}$. Since $\|\mathbf{a}\|=\|\mathbf{b}\|=0$, the inequality would only detect entanglement for $\|\mathbf{T}_{AB}\|>1$.
But nonetheless it is the largest possible entangled region as, e.g., the state $\rho_{\text{cc}}=1/2(|00\rangle\langle00|+|11\rangle\langle11|$ is separable with $\|\mathbf{a}\|=\|\mathbf{b}\|=0$ and $\|\mathbf{T}_{AB}\|=1$.
It was shown that separable states can achieve equality 
in Eq.~(\ref{eq:cond_entanglement}) for all values of $\|\mathbf{a}\|$ and $\|\mathbf{b}\|$~\cite{Imai_2021}.
In Fig.~\ref{fig:entanglement_bound} this region is shown in orange.
A necessary condition for the violation of a Bell-inequality of CHSH form is given by $\|\mathbf{T}_{AB}\|>1$~\cite{HORODECKI1995340}.

A sufficient condition for separability was found by the authors of Refs.~\cite{Zyczkowski_1998,Gurvits_2002} in the form of a separable ball around the maximally mixed state. A state is separable if
\begin{align}
    \|\mathbf{a}\|^2+\|\mathbf{b}\|^2+\|\mathbf{T}_{AB}\|^2\le\frac{1}{3}\ \ .
\end{align}
This relation defines a ball of separable states of radius $1/\sqrt{3}$ around the origin, shown in blue in Fig.~\ref{fig:entanglement_bound}.

But this is not the largest separable set, i.e., a set where only separable states are mapped to. In fact, the inequality 
\begin{align}\label{eq:cond_separability}
    \|\mathbf{T}_{AB}\|\le\begin{cases} \sqrt{\tfrac{1}{6}(2-3(\|\mathbf{a}\|+\|\mathbf{b}\|)^2)}\quad &\|\mathbf{a}\|+\|\mathbf{b}\|\le\tfrac{2}{3}\\
    1-\|\mathbf{a}\|-\|\mathbf{b}\| &\tfrac{2}{3}<\|\mathbf{a}\|+\|\mathbf{b}\|\le1\\
    \|\mathbf{a}\|+\|\mathbf{b}\|-1 &1<\|\mathbf{a}\|+\|\mathbf{b}\|.
    \end{cases}
\end{align}
is a sufficient condition for the separability of a state and the strongest sufficient condition formulated in terms of $\|\mathbf{a}\|$, $\|\mathbf{b}\|$ and $\|\mathbf{T}_{AB}\|$. A derivation of this inequality can be found in Appendix~\ref{app:proof_separable}.
This extended region is shown in green in Fig.~\ref{fig:entanglement_bound}.

\section{Maximally entangled mixed states}
When trying to maximize the entanglement of a state under certain restrictions two families of states are of particular interest.
The first family consists of states that maximize the entanglement for a fixed global purity, referred to as maximally entangled mixed states (MEMS)~\cite{Ishizaka_2000,Munro_2001,Wei_2003}. For a two-qubit system they are of the form
\begin{align}\label{eq:mems}
    \rho_{\scalebox{0.5}{MEMS}}(x,\theta)=&x(|00\rangle\langle00|+|11\rangle\langle11|)+(1-2x-\theta)|01\rangle\langle01|\notag\\
    &+\theta/2(|00\rangle+|11\rangle)(\langle00|+\langle11|).
\end{align}
The entanglement, quantified by the Wootters concurrence~\cite{Wootters}, becomes $\mathcal{C}(\rho_{\scalebox{0.5}{MEMS}})=\theta$ for these states. For $0\le\mathcal{C}(\rho_{\scalebox{0.5}{MEMS}})\le2/3$ it is maximized by $x=1/3-\theta/2$ and in the model the states have coordinates $(1/3,1/3,\sqrt{2\theta^2+1/9})$.
For $2/3\le\mathcal{C}(\rho_{\scalebox{0.5}{MEMS}})\le1$ the states that maximize the concurrence are those with $x=0$ and have coordinates $(1-\theta,1-\theta,\sqrt{1-4\theta+6\theta^2})$.

\begin{figure}[t]
  \centering
  \includegraphics[width=.9\linewidth]{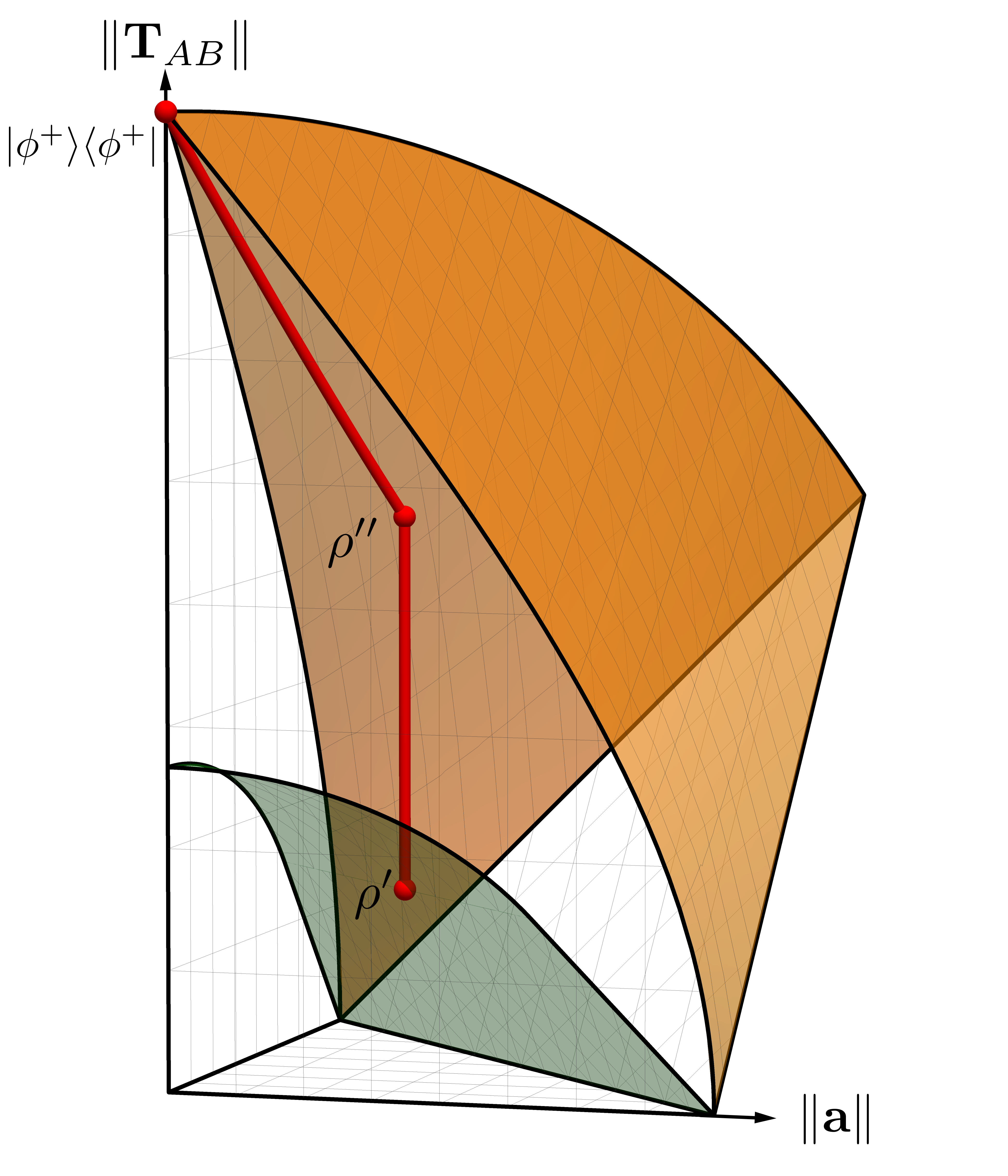}
  \caption{Maximally entangled states with respect to some purity constraint. The states maximizing the concurrence for a given purity (MEMS) are shown in red, they start at the separable region with the state $\rho'$ and continue in a straight line to the state $\rho''$. Here they bend and continue (not in a straight line) until the maximally entangled state. The states maximizing the concurrence for a given mixedness of the marginals (MEMMS) correspond to the states on the boundary of our model and are shown in orange.
    }
  \label{fig:mems_memms}
\end{figure}

In  Fig.~\ref{fig:mems_memms} they are depicted in red, where three special cases are highlighted: the state $\rho'=\rho_{\scalebox{0.5}{MEMS}}(1/3,0)$ is the only  unentangled state of this family located at the boundary of the separable region. The state $\rho''=\rho_{\scalebox{0.5}{MEMS}}(0,2/3)$ connects the two different pieces and incidentally also sits on the boundary of the entangled region. Finally the state $\rho_{\scalebox{0.5}{MEMS}}(0,1)=|\phi^+\rangle\langle\phi^+|$ maximizes the overall entanglement.

The other family of interest are maximally entangled states with fixed marginal mixedness (MEMMS)~\cite{Adesso_2003,Baio_2019}. That is, they maximize the two-qubit concurrence~\cite{Wootters} for given local Bloch lengths $\|\mathbf{a}\|$ and $\|\mathbf{b}\|$.
The states in Eq.~\eqref{eq:boundary} representing the upper boundary of our model of the state space are exactly the MEMMS for two qubits. They are shown in orange in Fig.~\ref{fig:mems_memms}.


\section{Conclusion}
We have introduced a novel inequality that relates the length of the local with global Bloch vector that is provably tight for qubits and provides deeper insight for qudits. 
By visualizing the three relevant quantities we provide an instructive representation of physically relevant properties in higher-dimensional state-spaces. Beyond aesthetic and didactic value, we believe that such simple relations will provide an important tool in the study of complex many-body systems, where complete information about the quantum state is unattainable for reasons of both experimental and theoretical complexity. 
One may expect that a deeper understanding of relations between sector 
lengths in multipartite systems may help to derive physically relevant
correlation constraints and thus make the action of the quantum marginal
problem more transparent. The solution of the simplest case presented in this
work gives a first indication of the path toward that objective. While our inequality is not necessarily tight in higher dimensions it remains a valuable tool, and strengthening the inequality is an open problem that will further broaden the application of this research program.


\section{Acknowledgements}
S.M. was supported by the Basque Government through IKUR strategy and through the BERC 2022-2025 program and by the Ministry of Science and Innovation: BCAM Severo Ochoa accreditation CEX2021-001142-S / MICIN / AEI / 10.13039/501100011033, PID2020-112948GB-I00 funded by MCIN/AEI/10.13039/501100011033 and by "ERDF A way of making Europe".
 M.H. acknowledges funding from the European Research Council (Consolidator grant ``Cocoquest'' 101043705) and the European Comission through the grant 'HyperSpace' (101070168).
 J.S. was supported by
 Grant PID2021-126273NB-I00 funded by MCIN/AEI/10.13039/501100011033 and by "ERDF A way of making Europe" as well as by
the Basque Government through Grant No.\ IT1470-22.

\bibliographystyle{apsrev4-1fixed_with_article_titles_full_names}
\bibliography{bibfile.bib}

\begin{thebibliography}{57}%
\makeatletter
\providecommand \@ifxundefined [1]{%
 \@ifx{#1\undefined}
}%
\providecommand \@ifnum [1]{%
 \ifnum #1\expandafter \@firstoftwo
 \else \expandafter \@secondoftwo
 \fi
}%
\providecommand \@ifx [1]{%
 \ifx #1\expandafter \@firstoftwo
 \else \expandafter \@secondoftwo
 \fi
}%
\providecommand \natexlab [1]{#1}%
\providecommand \enquote  [1]{#1}%
\providecommand \bibnamefont  [1]{#1}%
\providecommand \bibfnamefont [1]{#1}%
\providecommand \citenamefont [1]{#1}%
\providecommand \href@noop [0]{\@secondoftwo}%
\providecommand \href [0]{\begingroup \@sanitize@url \@href}%
\providecommand \@href[1]{\@@startlink{#1}\@@href}%
\providecommand \@@href[1]{\endgroup#1\@@endlink}%
\providecommand \@sanitize@url [0]{\catcode `\\12\catcode `\$12\catcode
  `\&12\catcode `\#12\catcode `\^12\catcode `\_12\catcode `\%12\relax}%
\providecommand \@@startlink[1]{}%
\providecommand \@@endlink[0]{}%
\providecommand \url  [0]{\begingroup\@sanitize@url \@url }%
\providecommand \@url [1]{\endgroup\@href {#1}{\urlprefix }}%
\providecommand \urlprefix  [0]{URL }%
\providecommand \Eprint [0]{\href }%
\providecommand \doibase [0]{https://doi.org/}%
\providecommand \selectlanguage [0]{\@gobble}%
\providecommand \bibinfo  [0]{\@secondoftwo}%
\providecommand \bibfield  [0]{\@secondoftwo}%
\providecommand \translation [1]{[#1]}%
\providecommand \BibitemOpen [0]{}%
\providecommand \bibitemStop [0]{}%
\providecommand \bibitemNoStop [0]{.\EOS\space}%
\providecommand \EOS [0]{\spacefactor3000\relax}%
\providecommand \BibitemShut  [1]{\csname bibitem#1\endcsname}%
\let\auto@bib@innerbib\@empty
\bibitem [{\citenamefont {Shannon}(1948)}]{Shannon48}%
  \BibitemOpen
  \bibfield  {author} {\bibinfo {author} {\bibfnamefont {C.~E.}\ \bibnamefont
  {Shannon}},\ }\emph {\enquote {\bibinfo {title} {{A mathematical theory of
  communication}},}\ }\href {\doibase 10.1002/j.1538-7305.1948.tb01338.x}
  {\bibfield  {journal} {\bibinfo  {journal} {Bell Sys. Tech. J.}\ }\textbf
  {\bibinfo {volume} {27}},\ \bibinfo {pages} {379} (\bibinfo {year}
  {1948})}\BibitemShut {NoStop}%
\bibitem [{\citenamefont {Araki}(1970)}]{Araki70}%
  \BibitemOpen
  \bibfield  {author} {\bibinfo {author} {\bibfnamefont {H.}~\bibnamefont
  {Araki}},\ }\emph {\enquote {\bibinfo {title} {{Entropy Inequalities}},}\
  }\href {\doibase 10.1007/BF01646092} {\bibfield  {journal} {\bibinfo
  {journal} {Commun. Math. Phys.}\ }\textbf {\bibinfo {volume} {18}},\ \bibinfo
  {pages} {160} (\bibinfo {year} {1970})}\BibitemShut {NoStop}%
\bibitem [{\citenamefont {Lieb}\ and\ \citenamefont
  {Ruskai}(1973)}]{LiebRusskai73}%
  \BibitemOpen
  \bibfield  {author} {\bibinfo {author} {\bibfnamefont {E.~H.}\ \bibnamefont
  {Lieb}}\ and\ \bibinfo {author} {\bibfnamefont {M.~B.}\ \bibnamefont
  {Ruskai}},\ }\emph {\enquote {\bibinfo {title} {{Proof of the strong
  subadditivity of quantum-mechanical entropy}},}\ }\href {\doibase
  10.1063/1.1666274} {\bibfield  {journal} {\bibinfo  {journal} {J. Math.
  Phys.}\ }\textbf {\bibinfo {volume} {14}},\ \bibinfo {pages} {1938} (\bibinfo
  {year} {1973})}\BibitemShut {NoStop}%
\bibitem [{\citenamefont {Pippenger}(2003)}]{Pippenger03}%
  \BibitemOpen
  \bibfield  {author} {\bibinfo {author} {\bibfnamefont {N.}~\bibnamefont
  {Pippenger}},\ }\emph {\enquote {\bibinfo {title} {{The Inequalities of
  Quantum Information Theory}},}\ }\href {\doibase 10.1109/TIT.2003.809569}
  {\bibfield  {journal} {\bibinfo  {journal} {IEEE Trans. Inf. Theory}\
  }\textbf {\bibinfo {volume} {49}},\ \bibinfo {pages} {773} (\bibinfo {year}
  {2003})}\BibitemShut {NoStop}%
\bibitem [{\citenamefont {Linden}\ and\ \citenamefont
  {Winter}(2005)}]{LindenWinter05}%
  \BibitemOpen
  \bibfield  {author} {\bibinfo {author} {\bibfnamefont {N.}~\bibnamefont
  {Linden}}\ and\ \bibinfo {author} {\bibfnamefont {A.}~\bibnamefont
  {Winter}},\ }\emph {\enquote {\bibinfo {title} {{A new inequality for the von
  Neumann entropy}},}\ }\href {\doibase 10.1007/s00220-005-1361-2} {\bibfield
  {journal} {\bibinfo  {journal} {Commun. in Math. Phys.}\ }\textbf {\bibinfo
  {volume} {259}},\ \bibinfo {pages} {129} (\bibinfo {year} {2005})},\ \Eprint
  {http://arxiv.org/abs/quant-ph/0406162} {quant-ph/0406162}\BibitemShut
  {NoStop}%
\bibitem [{\citenamefont {Cadney}\ \emph {et~al.}(2012)\citenamefont {Cadney},
  \citenamefont {Linden},\ and\ \citenamefont {Winter}}]{CadneyLindenWinter12}%
  \BibitemOpen
  \bibfield  {author} {\bibinfo {author} {\bibfnamefont {J.}~\bibnamefont
  {Cadney}}, \bibinfo {author} {\bibfnamefont {N.}~\bibnamefont {Linden}}, \
  and\ \bibinfo {author} {\bibfnamefont {A.}~\bibnamefont {Winter}},\ }\emph
  {\enquote {\bibinfo {title} {{Infinitely Many Constrained Inequalities for
  the von Neumann Entropy}},}\ }\href {\doibase 10.1109/TIT.2012.2185036}
  {\bibfield  {journal} {\bibinfo  {journal} {IEEE Trans. Inf. Theory}\
  }\textbf {\bibinfo {volume} {58}},\ \bibinfo {pages} {3657} (\bibinfo {year}
  {2012})},\ \Eprint {http://arxiv.org/abs/1107.0624} {1107.0624}\BibitemShut
  {NoStop}%
\bibitem [{\citenamefont {Coffman}\ \emph {et~al.}(2000)\citenamefont
  {Coffman}, \citenamefont {Kundu},\ and\ \citenamefont
  {Wootters}}]{Coffman_2000}%
  \BibitemOpen
  \bibfield  {author} {\bibinfo {author} {\bibfnamefont {V.}~\bibnamefont
  {Coffman}}, \bibinfo {author} {\bibfnamefont {J.}~\bibnamefont {Kundu}}, \
  and\ \bibinfo {author} {\bibfnamefont {W.~K.}\ \bibnamefont {Wootters}},\
  }\emph {\enquote {\bibinfo {title} {Distributed entanglement},}\ }\href
  {\doibase 10.1103/physreva.61.052306} {\bibfield  {journal} {\bibinfo
  {journal} {Phys. Rev. A}\ }\textbf {\bibinfo {volume} {61}} (\bibinfo {year}
  {2000}),\ 10.1103/physreva.61.052306},\ \Eprint
  {http://arxiv.org/abs/quant-ph/9907047} {arXiv:quant-ph/9907047}\BibitemShut
  {NoStop}%
\bibitem [{\citenamefont {Osborne}\ and\ \citenamefont
  {Verstraete}(2006)}]{OsborneVerstraete2006}%
  \BibitemOpen
  \bibfield  {author} {\bibinfo {author} {\bibfnamefont {T.}~\bibnamefont
  {Osborne}}\ and\ \bibinfo {author} {\bibfnamefont {F.}~\bibnamefont
  {Verstraete}},\ }\emph {\enquote {\bibinfo {title} {{General Monogamy
  Inequality for Bipartite Qubit Entanglement}},}\ }\href {\doibase
  10.1103/PhysRevLett.96.220503} {\bibfield  {journal} {\bibinfo  {journal}
  {Phys. Rev. Lett.}\ }\textbf {\bibinfo {volume} {96}},\ \bibinfo {pages}
  {220503} (\bibinfo {year} {2006})},\ \Eprint
  {http://arxiv.org/abs/quant-ph/0502176} {arXiv:quant-ph/0502176}\BibitemShut
  {NoStop}%
\bibitem [{\citenamefont {Eltschka}\ \emph {et~al.}(2009)\citenamefont
  {Eltschka}, \citenamefont {Osterloh},\ and\ \citenamefont
  {Siewert}}]{Eltschka2009}%
  \BibitemOpen
  \bibfield  {author} {\bibinfo {author} {\bibfnamefont {C.}~\bibnamefont
  {Eltschka}}, \bibinfo {author} {\bibfnamefont {A.}~\bibnamefont {Osterloh}},
  \ and\ \bibinfo {author} {\bibfnamefont {J.}~\bibnamefont {Siewert}},\ }\emph
  {\enquote {\bibinfo {title} {Possibility of generalized monogamy relations
  for multipartite entanglement beyond three qubits},}\ }\href {\doibase
  10.1103/PhysRevA.80.032313} {\bibfield  {journal} {\bibinfo  {journal} {Phys.
  Rev. A}\ }\textbf {\bibinfo {volume} {80}},\ \bibinfo {pages} {032313}
  (\bibinfo {year} {2009})},\ \Eprint {http://arxiv.org/abs/0904.1034}
  {arXiv:0904.1034}\BibitemShut {NoStop}%
\bibitem [{\citenamefont {Bai}\ \emph {et~al.}(2014)\citenamefont {Bai},
  \citenamefont {Xu},\ and\ \citenamefont {Wang}}]{Bai2014}%
  \BibitemOpen
  \bibfield  {author} {\bibinfo {author} {\bibfnamefont {Y.-K.}\ \bibnamefont
  {Bai}}, \bibinfo {author} {\bibfnamefont {Y.-F.}\ \bibnamefont {Xu}}, \ and\
  \bibinfo {author} {\bibfnamefont {Z.~D.}\ \bibnamefont {Wang}},\ }\emph
  {\enquote {\bibinfo {title} {General monogamy relation for the entanglement
  of formation in multiqubit systems},}\ }\href {\doibase
  10.1103/PhysRevLett.113.100503} {\bibfield  {journal} {\bibinfo  {journal}
  {Phys. Rev. Lett.}\ }\textbf {\bibinfo {volume} {113}},\ \bibinfo {pages}
  {100503} (\bibinfo {year} {2014})},\ \Eprint {http://arxiv.org/abs/1401.3205}
  {arXiv:1401.3205}\BibitemShut {NoStop}%
\bibitem [{\citenamefont {Lancien}\ \emph {et~al.}(2016)\citenamefont
  {Lancien}, \citenamefont {Di~Martino}, \citenamefont {Huber}, \citenamefont
  {Piani}, \citenamefont {Adesso},\ and\ \citenamefont {Winter}}]{Lancien2016}%
  \BibitemOpen
  \bibfield  {author} {\bibinfo {author} {\bibfnamefont {C.}~\bibnamefont
  {Lancien}}, \bibinfo {author} {\bibfnamefont {S.}~\bibnamefont {Di~Martino}},
  \bibinfo {author} {\bibfnamefont {M.}~\bibnamefont {Huber}}, \bibinfo
  {author} {\bibfnamefont {M.}~\bibnamefont {Piani}}, \bibinfo {author}
  {\bibfnamefont {G.}~\bibnamefont {Adesso}}, \ and\ \bibinfo {author}
  {\bibfnamefont {A.}~\bibnamefont {Winter}},\ }\emph {\enquote {\bibinfo
  {title} {{Should Entanglement Measures be Monogamous or Faithful?}}}\ }\href
  {\doibase 10.1103/PhysRevLett.117.060501} {\bibfield  {journal} {\bibinfo
  {journal} {Phys. Rev. Lett.}\ }\textbf {\bibinfo {volume} {117}},\ \bibinfo
  {pages} {060501} (\bibinfo {year} {2016})},\ \Eprint
  {http://arxiv.org/abs/1604.02189} {arXiv:1604.02189}\BibitemShut {NoStop}%
\bibitem [{\citenamefont {Regula}\ \emph {et~al.}(2016)\citenamefont {Regula},
  \citenamefont {Osterloh},\ and\ \citenamefont {Adesso}}]{Adesso2016}%
  \BibitemOpen
  \bibfield  {author} {\bibinfo {author} {\bibfnamefont {B.}~\bibnamefont
  {Regula}}, \bibinfo {author} {\bibfnamefont {A.}~\bibnamefont {Osterloh}}, \
  and\ \bibinfo {author} {\bibfnamefont {G.}~\bibnamefont {Adesso}},\ }\emph
  {\enquote {\bibinfo {title} {{Strong monogamy inequalities for four
  qubits}},}\ }\href {\doibase 10.1103/PhysRevA.93.052338} {\bibfield
  {journal} {\bibinfo  {journal} {Phys. Rev. A}\ }\textbf {\bibinfo {volume}
  {93}},\ \bibinfo {pages} {052338} (\bibinfo {year} {2016})},\ \Eprint
  {http://arxiv.org/abs/1604.03419} {arXiv:1604.03419}\BibitemShut {NoStop}%
\bibitem [{\citenamefont {Allen}\ and\ \citenamefont
  {Meyer}(2017)}]{Meyer2017}%
  \BibitemOpen
  \bibfield  {author} {\bibinfo {author} {\bibfnamefont {G.W.}\ \bibnamefont
  {Allen}}\ and\ \bibinfo {author} {\bibfnamefont {D.A.}\ \bibnamefont
  {Meyer}},\ }\emph {\enquote {\bibinfo {title} {{Polynomial Monogamy Relations
  for Entanglement Negativity}},}\ }\href {\doibase
  doi.org/10.1103/PhysRevLett.118.080402} {\bibfield  {journal} {\bibinfo
  {journal} {Phys. Rev. Lett.}\ }\textbf {\bibinfo {volume} {118}},\ \bibinfo
  {pages} {080402} (\bibinfo {year} {2017})},\ \Eprint
  {http://arxiv.org/abs/1604.03419} {arXiv:1604.03419}\BibitemShut {NoStop}%
\bibitem [{\citenamefont {S.}(2017)}]{Camalet2017}%
  \BibitemOpen
  \bibfield  {author} {\bibinfo {author} {\bibfnamefont {Camalet}\ \bibnamefont
  {S.}},\ }\emph {\enquote {\bibinfo {title} {Monogamy inequality for any local
  quantum resource and entanglement},}\ }\href {\doibase
  10.1103/PhysRevLett.119.110503} {\bibfield  {journal} {\bibinfo  {journal}
  {Phys. Rev. Lett.}\ }\textbf {\bibinfo {volume} {119}},\ \bibinfo {pages}
  {110503} (\bibinfo {year} {2017})},\ \Eprint
  {http://arxiv.org/abs/1704.03199} {arXiv:1704.03199}\BibitemShut {NoStop}%
\bibitem [{\citenamefont {Gour}\ and\ \citenamefont {Guo}(2017)}]{Gour2018}%
  \BibitemOpen
  \bibfield  {author} {\bibinfo {author} {\bibfnamefont {G.}~\bibnamefont
  {Gour}}\ and\ \bibinfo {author} {\bibfnamefont {Y.}~\bibnamefont {Guo}},\
  }\emph {\enquote {\bibinfo {title} {{Monogamy of entanglement without
  inequalities}},}\ }\href {\doibase 10.22331/q-2018-08-13-81} {\bibfield
  {journal} {\bibinfo  {journal} {Quantum}\ }\textbf {\bibinfo {volume} {2}},\
  \bibinfo {pages} {81} (\bibinfo {year} {2017})},\ \Eprint
  {http://arxiv.org/abs/1710.03295} {arXiv:1710.03295}\BibitemShut {NoStop}%
\bibitem [{\citenamefont {Eltschka}\ and\ \citenamefont
  {Siewert}(2018)}]{EltschkaSiewert2018}%
  \BibitemOpen
  \bibfield  {author} {\bibinfo {author} {\bibfnamefont {C.}~\bibnamefont
  {Eltschka}}\ and\ \bibinfo {author} {\bibfnamefont {J.}~\bibnamefont
  {Siewert}},\ }\emph {\enquote {\bibinfo {title} {{Distribution of
  entanglement and correlations in all finite dimensions}},}\ }\href {\doibase
  10.22331/q-2018-05-22-64} {\bibfield  {journal} {\bibinfo  {journal}
  {Quantum}\ }\textbf {\bibinfo {volume} {2}},\ \bibinfo {pages} {64} (\bibinfo
  {year} {2018})},\ \Eprint {http://arxiv.org/abs/1708.09639}
  {arXiv:1708.09639}\BibitemShut {NoStop}%
\bibitem [{\citenamefont {Eltschka}\ \emph {et~al.}(2018)\citenamefont
  {Eltschka}, \citenamefont {Huber}, \citenamefont {G\"uhne},\ and\
  \citenamefont {Siewert}}]{EltschkaHuberGuhneSiewert18}%
  \BibitemOpen
  \bibfield  {author} {\bibinfo {author} {\bibfnamefont {C.}~\bibnamefont
  {Eltschka}}, \bibinfo {author} {\bibfnamefont {F.}~\bibnamefont {Huber}},
  \bibinfo {author} {\bibfnamefont {O.}~\bibnamefont {G\"uhne}}, \ and\
  \bibinfo {author} {\bibfnamefont {J.}~\bibnamefont {Siewert}},\ }\emph
  {\enquote {\bibinfo {title} {{Exponentially many entanglement and correlation
  constraints for multipartite quantum states}},}\ }\href {\doibase
  10.1103/PhysRevA.98.052317} {\bibfield  {journal} {\bibinfo  {journal} {Phys.
  Rev. A}\ }\textbf {\bibinfo {volume} {98}},\ \bibinfo {pages} {052317}
  (\bibinfo {year} {2018})},\ \Eprint {http://arxiv.org/abs/1807.09165}
  {arXiv:1807.09165}\BibitemShut {NoStop}%
\bibitem [{\citenamefont {Bengtsson}\ and\ \citenamefont
  {\.Zyczkowski}(2017)}]{Bengtsson_2006}%
  \BibitemOpen
  \bibfield  {author} {\bibinfo {author} {\bibfnamefont {I.}~\bibnamefont
  {Bengtsson}}\ and\ \bibinfo {author} {\bibfnamefont {K.}~\bibnamefont
  {\.Zyczkowski}},\ }\href {\doibase 10.1017/9781139207010} {\emph {\bibinfo
  {title} {Geometry of Quantum States: An Introduction to Quantum Entanglement,
  2nd edition}}}\ (\bibinfo  {publisher} {Cambridge University Press},\
  \bibinfo {year} {2017})\BibitemShut {NoStop}%
\bibitem [{\citenamefont {Bengtsson}\ \emph {et~al.}(2012)\citenamefont
  {Bengtsson}, \citenamefont {Weis},\ and\ \citenamefont
  {\.Zyczkowski}}]{Bengtsson_2012}%
  \BibitemOpen
  \bibfield  {author} {\bibinfo {author} {\bibfnamefont {I.}~\bibnamefont
  {Bengtsson}}, \bibinfo {author} {\bibfnamefont {S.}~\bibnamefont {Weis}}, \
  and\ \bibinfo {author} {\bibfnamefont {K.}~\bibnamefont {\.Zyczkowski}},\
  }\href {\doibase 10.1007/978-3-0348-0448-6_15} {\emph {\bibinfo {title}
  {{Geometry of the Set of Mixed Quantum States: An Apophatic Approach}}}}\
  (\bibinfo  {publisher} {Springer Basel},\ \bibinfo {year} {2012})\ p.\
  \bibinfo {pages} {175}\BibitemShut {NoStop}%
\bibitem [{\citenamefont {Klyachko}(2006)}]{Klyachko2006}%
  \BibitemOpen
  \bibfield  {author} {\bibinfo {author} {\bibfnamefont {A.~A.}\ \bibnamefont
  {Klyachko}},\ }\emph {\enquote {\bibinfo {title} {{Quantum marginal problem
  and $N$-representability}},}\ }\href {\doibase 10.1088/1742-6596/36/1/014}
  {\bibfield  {journal} {\bibinfo  {journal} {J. Phys.: Conf. Ser.}\ }\textbf
  {\bibinfo {volume} {36}},\ \bibinfo {pages} {72} (\bibinfo {year} {2006})},\
  \Eprint {http://arxiv.org/abs/quant-ph/0511102}
  {arXiv:quant-ph/0511102}\BibitemShut {NoStop}%
\bibitem [{\citenamefont {Yu}\ \emph {et~al.}(2021)\citenamefont {Yu},
  \citenamefont {Simnacher}, \citenamefont {Wyderka}, \citenamefont {Nguyen},\
  and\ \citenamefont {G\"uhne}}]{Yu_2021}%
  \BibitemOpen
  \bibfield  {author} {\bibinfo {author} {\bibfnamefont {X.-D.}\ \bibnamefont
  {Yu}}, \bibinfo {author} {\bibfnamefont {T.}~\bibnamefont {Simnacher}},
  \bibinfo {author} {\bibfnamefont {N.}~\bibnamefont {Wyderka}}, \bibinfo
  {author} {\bibfnamefont {H.~C.}\ \bibnamefont {Nguyen}}, \ and\ \bibinfo
  {author} {\bibfnamefont {O.}~\bibnamefont {G\"uhne}},\ }\emph {\enquote
  {\bibinfo {title} {A complete hierarchy for the pure state marginal problem
  in quantum mechanics},}\ }\href
  {https://doi.org/10.1038\%2Fs41467-020-20799-5} {\bibfield  {journal}
  {\bibinfo  {journal} {Nat. Commun.}\ }\textbf {\bibinfo {volume} {12}}
  (\bibinfo {year} {2021})},\ \Eprint {http://arxiv.org/abs/arXiv:2008.02124}
  {arXiv:2008.02124}\BibitemShut {NoStop}%
\bibitem [{\citenamefont {Bravyi}(2004)}]{Bravyi2003}%
  \BibitemOpen
  \bibfield  {author} {\bibinfo {author} {\bibfnamefont {S.}~\bibnamefont
  {Bravyi}},\ }\emph {\enquote {\bibinfo {title} {{Requirements for
  compatibility between local and multipartite quantum states}},}\ }\href@noop
  {} {\bibfield  {journal} {\bibinfo  {journal} {Quant. Inf. Comp.}\ }\textbf
  {\bibinfo {volume} {4}},\ \bibinfo {pages} {12} (\bibinfo {year} {2004})},\
  \Eprint {http://arxiv.org/abs/quant-ph/0301014}
  {arXiv:quant-ph/0301014}\BibitemShut {NoStop}%
\bibitem [{\citenamefont {Tran}\ \emph {et~al.}(2016)\citenamefont {Tran},
  \citenamefont {Dakic}, \citenamefont {Arnault},\ and\ \citenamefont
  {Laskowski}}]{Tran2016}%
  \BibitemOpen
  \bibfield  {author} {\bibinfo {author} {\bibfnamefont {M.~C.}\ \bibnamefont
  {Tran}}, \bibinfo {author} {\bibfnamefont {B.}~\bibnamefont {Dakic}},
  \bibinfo {author} {\bibfnamefont {F.}~\bibnamefont {Arnault}}, \ and\
  \bibinfo {author} {\bibfnamefont {W.}~\bibnamefont {Laskowski}},\ }\emph
  {\enquote {\bibinfo {title} {Quantum entanglement from random
  measurements},}\ }\href {\doibase doi:10.1103/PhysRevA.92.050301} {\bibfield
  {journal} {\bibinfo  {journal} {Phys. Rev. A}\ }\textbf {\bibinfo {volume}
  {92}},\ \bibinfo {pages} {050301} (\bibinfo {year} {2016})}\BibitemShut
  {NoStop}%
\bibitem [{\citenamefont {Knips}(2020)}]{Knips2020}%
  \BibitemOpen
  \bibfield  {author} {\bibinfo {author} {\bibfnamefont {L.}~\bibnamefont
  {Knips}},\ }\emph {\enquote {\bibinfo {title} {A moment for random
  measurements},}\ }\href {\doibase https://doi.org/10.22331/qv-2020-11-19-47}
  {\bibfield  {journal} {\bibinfo  {journal} {Quantum Views}\ }\textbf
  {\bibinfo {volume} {4}},\ \bibinfo {pages} {47} (\bibinfo {year}
  {2020})}\BibitemShut {NoStop}%
\bibitem [{\citenamefont {Elben}\ \emph {et~al.}(2022)\citenamefont {Elben},
  \citenamefont {Flammia}, \citenamefont {Huang}, \citenamefont {Kueng},
  \citenamefont {Preskill}, \citenamefont {Vermersch},\ and\ \citenamefont
  {Zoller}}]{Elben2022}%
  \BibitemOpen
  \bibfield  {author} {\bibinfo {author} {\bibfnamefont {A.}~\bibnamefont
  {Elben}}, \bibinfo {author} {\bibfnamefont {S.~T.}\ \bibnamefont {Flammia}},
  \bibinfo {author} {\bibfnamefont {H.-Y.}\ \bibnamefont {Huang}}, \bibinfo
  {author} {\bibfnamefont {R.}~\bibnamefont {Kueng}}, \bibinfo {author}
  {\bibfnamefont {J.}~\bibnamefont {Preskill}}, \bibinfo {author}
  {\bibfnamefont {B.}~\bibnamefont {Vermersch}}, \ and\ \bibinfo {author}
  {\bibfnamefont {P.}~\bibnamefont {Zoller}},\ }\emph {\enquote {\bibinfo
  {title} {The randomized measurement toolbox},}\ }\href {\doibase
  doi:10.1038/s42254-022-00535-2} {\bibfield  {journal} {\bibinfo  {journal}
  {Nat. Rev. Phys.}\ }\textbf {\bibinfo {volume} {5}},\ \bibinfo {pages} {9}
  (\bibinfo {year} {2022})}\BibitemShut {NoStop}%
\bibitem [{\citenamefont {Fano}(1954)}]{Fano1954}%
  \BibitemOpen
  \bibfield  {author} {\bibinfo {author} {\bibfnamefont {U.}~\bibnamefont
  {Fano}},\ }\emph {\enquote {\bibinfo {title} {A stokes-parameter technique
  for the treatment of polarization in quantum mechanics},}\ }\href {\doibase
  10.1103/PhysRev.93.121} {\bibfield  {journal} {\bibinfo  {journal} {Phys.
  Rev.}\ }\textbf {\bibinfo {volume} {93}},\ \bibinfo {pages} {121} (\bibinfo
  {year} {1954})}\BibitemShut {NoStop}%
\bibitem [{\citenamefont {Fano}(1957)}]{Fano1957}%
  \BibitemOpen
  \bibfield  {author} {\bibinfo {author} {\bibfnamefont {U.}~\bibnamefont
  {Fano}},\ }\emph {\enquote {\bibinfo {title} {Description of states in
  quantum mechanics by density matrix and operator techniques},}\ }\href
  {\doibase 10.1103/RevModPhys.29.74} {\bibfield  {journal} {\bibinfo
  {journal} {Rev. Mod. Phys.}\ }\textbf {\bibinfo {volume} {29}},\ \bibinfo
  {pages} {74} (\bibinfo {year} {1957})}\BibitemShut {NoStop}%
\bibitem [{\citenamefont {Bloore}(1976)}]{Bloore_1976}%
  \BibitemOpen
  \bibfield  {author} {\bibinfo {author} {\bibfnamefont {F.~J.}\ \bibnamefont
  {Bloore}},\ }\emph {\enquote {\bibinfo {title} {Geometrical description of
  the convex sets of states for systems with spin-1/2 and spin-1},}\ }\href
  {\doibase 10.1088/0305-4470/9/12/011} {\bibfield  {journal} {\bibinfo
  {journal} {J. Phys. A}\ }\textbf {\bibinfo {volume} {9}},\ \bibinfo {pages}
  {2059} (\bibinfo {year} {1976})}\BibitemShut {NoStop}%
\bibitem [{\citenamefont {Kimura}(2003)}]{Kimura_2003}%
  \BibitemOpen
  \bibfield  {author} {\bibinfo {author} {\bibfnamefont {G.}~\bibnamefont
  {Kimura}},\ }\emph {\enquote {\bibinfo {title} {The bloch vector for n-level
  systems},}\ }\href {\doibase 10.1016/s0375-9601(03)00941-1} {\bibfield
  {journal} {\bibinfo  {journal} {Phys. Lett. A}\ }\textbf {\bibinfo {volume}
  {314}},\ \bibinfo {pages} {339} (\bibinfo {year} {2003})},\ \Eprint
  {http://arxiv.org/abs/quant-ph/0301152} {arXiv:quant-ph/0301152}\BibitemShut
  {NoStop}%
\bibitem [{\citenamefont {Kurzy\'nski}(2011)}]{Kurzynski_2009}%
  \BibitemOpen
  \bibfield  {author} {\bibinfo {author} {\bibfnamefont {P.}~\bibnamefont
  {Kurzy\'nski}},\ }\emph {\enquote {\bibinfo {title} {Multi-bloch vector
  representation of the qutrit},}\ }\href {\doibase 10.5555/2011406.2011407}
  {\bibfield  {journal} {\bibinfo  {journal} {Quantum Inf. Comp.}\ }\textbf
  {\bibinfo {volume} {11}},\ \bibinfo {pages} {361} (\bibinfo {year} {2011})},\
  \Eprint {http://arxiv.org/abs/0912.3155} {arXiv:0912.3155}\BibitemShut
  {NoStop}%
\bibitem [{\citenamefont {Sarbicki}\ and\ \citenamefont
  {Bengtsson}(2012)}]{Sarbicki_2012}%
  \BibitemOpen
  \bibfield  {author} {\bibinfo {author} {\bibfnamefont {G.}~\bibnamefont
  {Sarbicki}}\ and\ \bibinfo {author} {\bibfnamefont {I.}~\bibnamefont
  {Bengtsson}},\ }\emph {\enquote {\bibinfo {title} {Dissecting the qutrit},}\
  }\href {\doibase 10.1088/1751-8113/46/3/035306} {\bibfield  {journal}
  {\bibinfo  {journal} {J. Phys. A: Math. Theor.}\ }\textbf {\bibinfo {volume}
  {46}},\ \bibinfo {pages} {035306} (\bibinfo {year} {2012})},\ \Eprint
  {http://arxiv.org/abs/1208.2118} {arXiv:1208.2118}\BibitemShut {NoStop}%
\bibitem [{\citenamefont {Goyal}\ \emph {et~al.}(2016)\citenamefont {Goyal},
  \citenamefont {Simon}, \citenamefont {Singh},\ and\ \citenamefont
  {Simon}}]{Goyal_2016}%
  \BibitemOpen
  \bibfield  {author} {\bibinfo {author} {\bibfnamefont {S.}~\bibnamefont
  {Goyal}}, \bibinfo {author} {\bibfnamefont {B.~N.}\ \bibnamefont {Simon}},
  \bibinfo {author} {\bibfnamefont {R.}~\bibnamefont {Singh}}, \ and\ \bibinfo
  {author} {\bibfnamefont {S.}~\bibnamefont {Simon}},\ }\emph {\enquote
  {\bibinfo {title} {Geometry of the generalized bloch sphere for qutrits},}\
  }\href {\doibase 10.1088/1751-8113/49/16/165203} {\bibfield  {journal}
  {\bibinfo  {journal} {J. Phys. A: Math. Theor.}\ }\textbf {\bibinfo {volume}
  {49}},\ \bibinfo {pages} {165203} (\bibinfo {year} {2016})},\ \Eprint
  {http://arxiv.org/abs/1111.4427} {arXiv:1111.4427}\BibitemShut {NoStop}%
\bibitem [{\citenamefont {Sharma}\ \emph {et~al.}(2021)\citenamefont {Sharma},
  \citenamefont {Ghosh},\ and\ \citenamefont {Sazim}}]{Sharma_2021}%
  \BibitemOpen
  \bibfield  {author} {\bibinfo {author} {\bibfnamefont {G.}~\bibnamefont
  {Sharma}}, \bibinfo {author} {\bibfnamefont {S.}~\bibnamefont {Ghosh}}, \
  and\ \bibinfo {author} {\bibfnamefont {S.}~\bibnamefont {Sazim}},\
  }\href@noop {} {\emph {\enquote {\bibinfo {title} {{Bloch sphere analog of
  qudits using Heisenberg-Weyl Operators}},}\ }}\Eprint
  {http://arxiv.org/abs/2101.06408} {arXiv:2101.06408} (\bibinfo {year}
  {2021})\BibitemShut {NoStop}%
\bibitem [{\citenamefont {Rau}(2021)}]{rau2021symmetries}%
  \BibitemOpen
  \bibfield  {author} {\bibinfo {author} {\bibfnamefont {A.~R.~P.}\
  \bibnamefont {Rau}},\ }\emph {\enquote {\bibinfo {title} {Symmetries and
  geometries of qubits, and their uses},}\ }\href {\doibase
  10.3390/sym13091732} {\bibfield  {journal} {\bibinfo  {journal} {Symmetry}\
  }\textbf {\bibinfo {volume} {13}},\ \bibinfo {pages} {1732} (\bibinfo {year}
  {2021})},\ \Eprint {http://arxiv.org/abs/2103.14105}
  {arXiv:2103.14105}\BibitemShut {NoStop}%
\bibitem [{\citenamefont {Eltschka}\ \emph {et~al.}(2021)\citenamefont
  {Eltschka}, \citenamefont {Huber}, \citenamefont {Morelli},\ and\
  \citenamefont {Siewert}}]{Eltschka_2021}%
  \BibitemOpen
  \bibfield  {author} {\bibinfo {author} {\bibfnamefont {C.}~\bibnamefont
  {Eltschka}}, \bibinfo {author} {\bibfnamefont {M.}~\bibnamefont {Huber}},
  \bibinfo {author} {\bibfnamefont {S.}~\bibnamefont {Morelli}}, \ and\
  \bibinfo {author} {\bibfnamefont {J.}~\bibnamefont {Siewert}},\ }\emph
  {\enquote {\bibinfo {title} {The shape of higher-dimensional state space:
  Bloch-ball analog for a qutrit},}\ }\href {\doibase
  10.22331/q-2021-06-29-485} {\bibfield  {journal} {\bibinfo  {journal}
  {Quantum}\ }\textbf {\bibinfo {volume} {5}},\ \bibinfo {pages} {485}
  (\bibinfo {year} {2021})},\ \Eprint {http://arxiv.org/abs/2012.00587}
  {arXiv:2012.00587}\BibitemShut {NoStop}%
\bibitem [{\citenamefont {Bertlmann}\ and\ \citenamefont
  {Krammer}(2008)}]{BertlmannKrammer08}%
  \BibitemOpen
  \bibfield  {author} {\bibinfo {author} {\bibfnamefont {Reinhold~A.}\
  \bibnamefont {Bertlmann}}\ and\ \bibinfo {author} {\bibfnamefont {Philipp}\
  \bibnamefont {Krammer}},\ }\emph {\enquote {\bibinfo {title} {{Bloch vectors
  for qudits}},}\ }\href {\doibase 10.1088/1751-8113/41/23/235303} {\bibfield
  {journal} {\bibinfo  {journal} {J. Phys. A}\ }\textbf {\bibinfo {volume}
  {41}},\ \bibinfo {pages} {235303} (\bibinfo {year} {2008})},\ \Eprint
  {http://arxiv.org/abs/0806.1174} {0806.1174}\BibitemShut {NoStop}%
\bibitem [{foo()}]{footnote1}%
  \BibitemOpen
  \href@noop {} {}\bibinfo {note} {We use the term correlation to refer to any
  quantity of the global system not visible locally, as opposed to the clearly
  defined terms correlation function and correlation tensor. The former refers
  to the normalized covariance between two observables, the latter to the
  collection of joint expectation values of a Bloch basis.}\BibitemShut {Stop}%
\bibitem [{\citenamefont {Gamel}(2016)}]{Gamel_2016}%
  \BibitemOpen
  \bibfield  {author} {\bibinfo {author} {\bibfnamefont {O.}~\bibnamefont
  {Gamel}},\ }\emph {\enquote {\bibinfo {title} {{Entangled Bloch spheres:
  Bloch matrix and two-qubit state space}},}\ }\href
  {https://doi.org/10.1103\%2Fphysreva.93.062320} {\bibfield  {journal}
  {\bibinfo  {journal} {Phys. Rev. A}\ }\textbf {\bibinfo {volume} {93}}
  (\bibinfo {year} {2016})},\ \Eprint {http://arxiv.org/abs/1602.01548}
  {arXiv:1602.01548}\BibitemShut {NoStop}%
\bibitem [{\citenamefont {Wie}(2020)}]{Wie_2020}%
  \BibitemOpen
  \bibfield  {author} {\bibinfo {author} {\bibfnamefont {C.-R.}\ \bibnamefont
  {Wie}},\ }\emph {\enquote {\bibinfo {title} {Two-qubit bloch sphere},}\
  }\href {\doibase 10.3390/physics2030021} {\bibfield  {journal} {\bibinfo
  {journal} {Physics}\ }\textbf {\bibinfo {volume} {2}},\ \bibinfo {pages}
  {383} (\bibinfo {year} {2020})},\ \Eprint {http://arxiv.org/abs/2003.01699}
  {arXiv:2003.01699}\BibitemShut {NoStop}%
\bibitem [{\citenamefont {Zyczkowski}(2022)}]{zyczkowski2022quantum}%
  \BibitemOpen
  \bibfield  {author} {\bibinfo {author} {\bibfnamefont {K.}~\bibnamefont
  {Zyczkowski}},\ }\href@noop {} {\emph {\enquote {\bibinfo {title} {{Quantum
  version of the Euler's problem: a geometric perspective}},}\ }}\Eprint
  {http://arxiv.org/abs/2212.03903} {arXiv:2212.03903} (\bibinfo {year}
  {2022})\BibitemShut {NoStop}%
\bibitem [{\citenamefont {Higuchi}\ \emph {et~al.}(2003)\citenamefont
  {Higuchi}, \citenamefont {Sudbery},\ and\ \citenamefont
  {Szulc}}]{Higuchi_2003}%
  \BibitemOpen
  \bibfield  {author} {\bibinfo {author} {\bibfnamefont {A.}~\bibnamefont
  {Higuchi}}, \bibinfo {author} {\bibfnamefont {A.}~\bibnamefont {Sudbery}}, \
  and\ \bibinfo {author} {\bibfnamefont {J.}~\bibnamefont {Szulc}},\ }\emph
  {\enquote {\bibinfo {title} {{One-Qubit Reduced States of a Pure Many-Qubit
  State: Polygon Inequalities}},}\ }\href
  {https://doi.org/10.1103%2Fphysrevlett.90.107902} {\bibfield  {journal}
  {\bibinfo  {journal} {Phys. Rev. Lett.}\ }\textbf {\bibinfo {volume} {90}}
  (\bibinfo {year} {2003})},\ \Eprint {http://arxiv.org/abs/quant-ph/0209085}
  {arXiv:quant-ph/0209085}\BibitemShut {NoStop}%
\bibitem [{\citenamefont {Cheng}\ and\ \citenamefont
  {Hall}(2017)}]{Cheng_2017}%
  \BibitemOpen
  \bibfield  {author} {\bibinfo {author} {\bibfnamefont {S.}~\bibnamefont
  {Cheng}}\ and\ \bibinfo {author} {\bibfnamefont {Michael J.~W.}\ \bibnamefont
  {Hall}},\ }\emph {\enquote {\bibinfo {title} {{Anisotropic Invariance and the
  Distribution of Quantum Correlations}},}\ }\href
  {https://doi.org/10.1103%2Fphysrevlett.118.010401} {\bibfield  {journal}
  {\bibinfo  {journal} {Phys. Rev. Lett.}\ }\textbf {\bibinfo {volume} {118}}
  (\bibinfo {year} {2017})},\ \Eprint {http://arxiv.org/abs/1610.09302}
  {arXiv:1610.09302}\BibitemShut {NoStop}%
\bibitem [{\citenamefont {Englert}\ and\ \citenamefont
  {Metwally}(2001)}]{Englert_2001}%
  \BibitemOpen
  \bibfield  {author} {\bibinfo {author} {\bibfnamefont {B.-G.}\ \bibnamefont
  {Englert}}\ and\ \bibinfo {author} {\bibfnamefont {N.}~\bibnamefont
  {Metwally}},\ }\emph {\enquote {\bibinfo {title} {Remarks on
  2{\textendash}q-bit states},}\ }\href {\doibase 10.1007/s003400000502}
  {\bibfield  {journal} {\bibinfo  {journal} {Appl. Phys. B}\ }\textbf
  {\bibinfo {volume} {72}},\ \bibinfo {pages} {35} (\bibinfo {year} {2001})},\
  \Eprint {http://arxiv.org/abs/quant-ph/0007053}
  {arXiv:quant-ph/0007053}\BibitemShut {NoStop}%
\bibitem [{\citenamefont {Morelli}\ \emph {et~al.}(2020)\citenamefont
  {Morelli}, \citenamefont {Kl{\"{o}}ckl}, \citenamefont {Eltschka},
  \citenamefont {Siewert},\ and\ \citenamefont {Huber}}]{Morelli20}%
  \BibitemOpen
  \bibfield  {author} {\bibinfo {author} {\bibfnamefont {S.}~\bibnamefont
  {Morelli}}, \bibinfo {author} {\bibfnamefont {C.}~\bibnamefont
  {Kl{\"{o}}ckl}}, \bibinfo {author} {\bibfnamefont {C.}~\bibnamefont
  {Eltschka}}, \bibinfo {author} {\bibfnamefont {J.}~\bibnamefont {Siewert}}, \
  and\ \bibinfo {author} {\bibfnamefont {M.}~\bibnamefont {Huber}},\ }\emph
  {\enquote {\bibinfo {title} {Dimensionally sharp inequalities for the linear
  entropy},}\ }\href {\doibase 10.1016/j.laa.2019.09.008} {\bibfield  {journal}
  {\bibinfo  {journal} {Linear Algebra Appl.}\ }\textbf {\bibinfo {volume}
  {584}},\ \bibinfo {pages} {294 } (\bibinfo {year} {2020})},\ \Eprint
  {http://arxiv.org/abs/1903.11887} {1903.11887}\BibitemShut {NoStop}%
\bibitem [{\citenamefont {Adesso}\ \emph {et~al.}(2016)\citenamefont {Adesso},
  \citenamefont {Cianciaruso},\ and\ \citenamefont
  {Bromley}}]{adesso_arxiv_2016}%
  \BibitemOpen
  \bibfield  {author} {\bibinfo {author} {\bibfnamefont {G.}~\bibnamefont
  {Adesso}}, \bibinfo {author} {\bibfnamefont {M.}~\bibnamefont {Cianciaruso}},
  \ and\ \bibinfo {author} {\bibfnamefont {T.~R.}\ \bibnamefont {Bromley}},\
  }\href@noop {} {\emph {\enquote {\bibinfo {title} {An introduction to quantum
  discord and non-classical correlations beyond entanglement},}\ }}\Eprint
  {http://arxiv.org/abs/1611.01959} {arXiv:1611.01959} (\bibinfo {year}
  {2016})\BibitemShut {NoStop}%
\bibitem [{\citenamefont {Horodecki}\ and\ \citenamefont
  {Horodecki}(1996)}]{Horodecki_1996}%
  \BibitemOpen
  \bibfield  {author} {\bibinfo {author} {\bibfnamefont {R.}~\bibnamefont
  {Horodecki}}\ and\ \bibinfo {author} {\bibfnamefont {M.}~\bibnamefont
  {Horodecki}},\ }\emph {\enquote {\bibinfo {title} {Information-theoretic
  aspects of inseparability of mixed states},}\ }\href {\doibase
  10.1103/physreva.54.1838} {\bibfield  {journal} {\bibinfo  {journal} {Phys.
  Rev. A}\ }\textbf {\bibinfo {volume} {54}},\ \bibinfo {pages} {1838}
  (\bibinfo {year} {1996})}\BibitemShut {NoStop}%
\bibitem [{\citenamefont {Imai}\ \emph {et~al.}(2021)\citenamefont {Imai},
  \citenamefont {Wyderka}, \citenamefont {Ketterer},\ and\ \citenamefont
  {G\"uhne}}]{Imai_2021}%
  \BibitemOpen
  \bibfield  {author} {\bibinfo {author} {\bibfnamefont {S.}~\bibnamefont
  {Imai}}, \bibinfo {author} {\bibfnamefont {N.}~\bibnamefont {Wyderka}},
  \bibinfo {author} {\bibfnamefont {A.}~\bibnamefont {Ketterer}}, \ and\
  \bibinfo {author} {\bibfnamefont {O.}~\bibnamefont {G\"uhne}},\ }\emph
  {\enquote {\bibinfo {title} {{Bound Entanglement from Randomized
  Measurements}},}\ }\href {https://doi.org/10.1103\%2Fphysrevlett.126.150501}
  {\bibfield  {journal} {\bibinfo  {journal} {Phys. Rev. Lett.}\ }\textbf
  {\bibinfo {volume} {126}} (\bibinfo {year} {2021})},\ \Eprint
  {http://arxiv.org/abs/2010.08372} {arXiv:2010.08372}\BibitemShut {NoStop}%
\bibitem [{\citenamefont {Werner}(1989)}]{Werner1989}%
  \BibitemOpen
  \bibfield  {author} {\bibinfo {author} {\bibfnamefont {R.~F.}\ \bibnamefont
  {Werner}},\ }\emph {\enquote {\bibinfo {title} {{Quantum states with
  Einstein-Podolsky-Rosen correlations admitting a hidden-variable model}},}\
  }\href {\doibase 10.1103/PhysRevA.40.4277} {\bibfield  {journal} {\bibinfo
  {journal} {Phys. Rev. A}\ }\textbf {\bibinfo {volume} {40}},\ \bibinfo
  {pages} {4277} (\bibinfo {year} {1989})}\BibitemShut {NoStop}%
\bibitem [{\citenamefont {Horodecki}\ \emph {et~al.}(1995)\citenamefont
  {Horodecki}, \citenamefont {Horodecki},\ and\ \citenamefont
  {Horodecki}}]{HORODECKI1995340}%
  \BibitemOpen
  \bibfield  {author} {\bibinfo {author} {\bibfnamefont {R.}~\bibnamefont
  {Horodecki}}, \bibinfo {author} {\bibfnamefont {P.}~\bibnamefont
  {Horodecki}}, \ and\ \bibinfo {author} {\bibfnamefont {M.}~\bibnamefont
  {Horodecki}},\ }\emph {\enquote {\bibinfo {title} {Violating bell inequality
  by mixed spin-12 states: necessary and sufficient condition},}\ }\href
  {\doibase https://doi.org/10.1016/0375-9601(95)00214-N} {\bibfield  {journal}
  {\bibinfo  {journal} {Phys. Lett. A}\ }\textbf {\bibinfo {volume} {200}},\
  \bibinfo {pages} {340} (\bibinfo {year} {1995})}\BibitemShut {NoStop}%
\bibitem [{\citenamefont {\.Zyczkowski}\ \emph {et~al.}(1998)\citenamefont
  {\.Zyczkowski}, \citenamefont {Horodecki}, \citenamefont {Sanpera},\ and\
  \citenamefont {Lewenstein}}]{Zyczkowski_1998}%
  \BibitemOpen
  \bibfield  {author} {\bibinfo {author} {\bibfnamefont {K.}~\bibnamefont
  {\.Zyczkowski}}, \bibinfo {author} {\bibfnamefont {P.}~\bibnamefont
  {Horodecki}}, \bibinfo {author} {\bibfnamefont {A.}~\bibnamefont {Sanpera}},
  \ and\ \bibinfo {author} {\bibfnamefont {M.}~\bibnamefont {Lewenstein}},\
  }\emph {\enquote {\bibinfo {title} {Volume of the set of separable states},}\
  }\href {\doibase 10.1103/physreva.58.883} {\bibfield  {journal} {\bibinfo
  {journal} {Phys. Rev. A}\ }\textbf {\bibinfo {volume} {58}},\ \bibinfo
  {pages} {883{\textendash}892} (\bibinfo {year} {1998})}\BibitemShut {NoStop}%
\bibitem [{\citenamefont {Gurvits}\ and\ \citenamefont
  {Barnum}(2002)}]{Gurvits_2002}%
  \BibitemOpen
  \bibfield  {author} {\bibinfo {author} {\bibfnamefont {L.}~\bibnamefont
  {Gurvits}}\ and\ \bibinfo {author} {\bibfnamefont {H.}~\bibnamefont
  {Barnum}},\ }\emph {\enquote {\bibinfo {title} {Largest separable balls
  around the maximally mixed bipartite quantum state},}\ }\href
  {http://dx.doi.org/10.1103/PhysRevA.66.062311} {\bibfield  {journal}
  {\bibinfo  {journal} {Phys. Rev. A}\ }\textbf {\bibinfo {volume} {66}}
  (\bibinfo {year} {2002})},\ \Eprint {http://arxiv.org/abs/quant-ph/0204159}
  {arXiv:quant-ph/0204159}\BibitemShut {NoStop}%
\bibitem [{\citenamefont {Ishizaka}\ and\ \citenamefont
  {Hiroshima}(2000)}]{Ishizaka_2000}%
  \BibitemOpen
  \bibfield  {author} {\bibinfo {author} {\bibfnamefont {S.}~\bibnamefont
  {Ishizaka}}\ and\ \bibinfo {author} {\bibfnamefont {T.}~\bibnamefont
  {Hiroshima}},\ }\emph {\enquote {\bibinfo {title} {Maximally entangled mixed
  states under nonlocal unitary operations in two qubits},}\ }\href
  {http://dx.doi.org/10.1103/PhysRevA.62.022310} {\bibfield  {journal}
  {\bibinfo  {journal} {Phys. Rev. A}\ }\textbf {\bibinfo {volume} {62}}
  (\bibinfo {year} {2000})}\BibitemShut {NoStop}%
\bibitem [{\citenamefont {Munro}\ \emph {et~al.}(2001)\citenamefont {Munro},
  \citenamefont {James}, \citenamefont {White},\ and\ \citenamefont
  {Kwiat}}]{Munro_2001}%
  \BibitemOpen
  \bibfield  {author} {\bibinfo {author} {\bibfnamefont {W.~J.}\ \bibnamefont
  {Munro}}, \bibinfo {author} {\bibfnamefont {D.~F.~V.}\ \bibnamefont {James}},
  \bibinfo {author} {\bibfnamefont {A.~G.}\ \bibnamefont {White}}, \ and\
  \bibinfo {author} {\bibfnamefont {P.~G.}\ \bibnamefont {Kwiat}},\ }\emph
  {\enquote {\bibinfo {title} {Maximizing the entanglement of two mixed
  qubits},}\ }\href {http://dx.doi.org/10.1103/PhysRevA.64.030302} {\bibfield
  {journal} {\bibinfo  {journal} {Phys. Rev. A}\ }\textbf {\bibinfo {volume}
  {64}} (\bibinfo {year} {2001})},\ \Eprint
  {http://arxiv.org/abs/quant-ph/0103113} {arXiv:quant-ph/0103113}\BibitemShut
  {NoStop}%
\bibitem [{\citenamefont {Wei}\ \emph {et~al.}(2003)\citenamefont {Wei},
  \citenamefont {Nemoto}, \citenamefont {Goldbart}, \citenamefont {Kwiat},
  \citenamefont {Munro},\ and\ \citenamefont {Verstraete}}]{Wei_2003}%
  \BibitemOpen
  \bibfield  {author} {\bibinfo {author} {\bibfnamefont {T.-C.}\ \bibnamefont
  {Wei}}, \bibinfo {author} {\bibfnamefont {K.}~\bibnamefont {Nemoto}},
  \bibinfo {author} {\bibfnamefont {P.~M.}\ \bibnamefont {Goldbart}}, \bibinfo
  {author} {\bibfnamefont {P.~G.}\ \bibnamefont {Kwiat}}, \bibinfo {author}
  {\bibfnamefont {W.~J.}\ \bibnamefont {Munro}}, \ and\ \bibinfo {author}
  {\bibfnamefont {F.}~\bibnamefont {Verstraete}},\ }\emph {\enquote {\bibinfo
  {title} {Maximal entanglement versus entropy for mixed quantum states},}\
  }\href {http://dx.doi.org/10.1103/PhysRevA.67.022110} {\bibfield  {journal}
  {\bibinfo  {journal} {Phys. Rev. A}\ }\textbf {\bibinfo {volume} {67}}
  (\bibinfo {year} {2003})},\ \Eprint {http://arxiv.org/abs/quant-ph/0208138}
  {arXiv:quant-ph/0208138}\BibitemShut {NoStop}%
\bibitem [{\citenamefont {Wootters}(1998)}]{Wootters}%
  \BibitemOpen
  \bibfield  {author} {\bibinfo {author} {\bibfnamefont {W.~K.}\ \bibnamefont
  {Wootters}},\ }\emph {\enquote {\bibinfo {title} {Entanglement of formation
  of an arbitrary state of two qubits},}\ }\href {\doibase
  doi.org/10.1103/PhysRevLett.80.2245} {\bibfield  {journal} {\bibinfo
  {journal} {Phys. Rev. Lett.}\ }\textbf {\bibinfo {volume} {80}},\ \bibinfo
  {pages} {2245} (\bibinfo {year} {1998})}\BibitemShut {NoStop}%
\bibitem [{\citenamefont {Adesso}\ \emph {et~al.}(2003)\citenamefont {Adesso},
  \citenamefont {Illuminati},\ and\ \citenamefont {De~Siena}}]{Adesso_2003}%
  \BibitemOpen
  \bibfield  {author} {\bibinfo {author} {\bibfnamefont {G.}~\bibnamefont
  {Adesso}}, \bibinfo {author} {\bibfnamefont {F.}~\bibnamefont {Illuminati}},
  \ and\ \bibinfo {author} {\bibfnamefont {S.}~\bibnamefont {De~Siena}},\
  }\emph {\enquote {\bibinfo {title} {Characterizing entanglement with global
  and marginal entropic measures},}\ }\href {\doibase
  10.1103/PhysRevA.68.062318} {\bibfield  {journal} {\bibinfo  {journal} {Phys.
  Rev. A}\ }\textbf {\bibinfo {volume} {68}},\ \bibinfo {pages} {062318}
  (\bibinfo {year} {2003})},\ \Eprint {http://arxiv.org/abs/quant-ph/0307192}
  {arXiv:quant-ph/0307192}\BibitemShut {NoStop}%
\bibitem [{\citenamefont {Baio}\ \emph {et~al.}(2019)\citenamefont {Baio},
  \citenamefont {Chru\'sci\'nski}, \citenamefont {Horodecki}, \citenamefont
  {Messina},\ and\ \citenamefont {Sarbicki}}]{Baio_2019}%
  \BibitemOpen
  \bibfield  {author} {\bibinfo {author} {\bibfnamefont {G.}~\bibnamefont
  {Baio}}, \bibinfo {author} {\bibfnamefont {D.}~\bibnamefont
  {Chru\'sci\'nski}}, \bibinfo {author} {\bibfnamefont {P.}~\bibnamefont
  {Horodecki}}, \bibinfo {author} {\bibfnamefont {A.}~\bibnamefont {Messina}},
  \ and\ \bibinfo {author} {\bibfnamefont {G.}~\bibnamefont {Sarbicki}},\
  }\emph {\enquote {\bibinfo {title} {Bounds on the entanglement of two-qutrit
  systems from fixed marginals},}\ }\href
  {http://dx.doi.org/10.1103/PhysRevA.99.062312} {\bibfield  {journal}
  {\bibinfo  {journal} {Phys. Rev. A}\ }\textbf {\bibinfo {volume} {99}}
  (\bibinfo {year} {2019})},\ \Eprint {http://arxiv.org/abs/1904.07650}
  {arXiv:1904.07650}\BibitemShut {NoStop}%
\end{thebibliography}%


\appendix

\section{Separability from Bloch lengths}\label{app:proof_separable}

Here we prove that
\begin{align}
    \|\mathbf{T}_{AB}\|\le\begin{cases} \sqrt{\tfrac{1}{6}(2-3(\|\mathbf{a}\|+\|\mathbf{b}\|)^2)}\quad &\|\mathbf{a}\|+\|\mathbf{b}\|\le\tfrac{2}{3}\\
    1-\|\mathbf{a}\|-\|\mathbf{b}\| &\tfrac{2}{3}<\|\mathbf{a}\|+\|\mathbf{b}\|\le1\\
    \|\mathbf{a}\|+\|\mathbf{b}\|-1 &1<\|\mathbf{a}\|+\|\mathbf{b}\|
    \end{cases}\notag
\end{align}

is a sufficient criteria for separability.

Note that through local unitary rotations the marginals of every state can be diagonalized. That is, after a local basis change every state can be expanded in a Pauli product basis as

\begin{widetext}
\begin{align}
\rho_{AB} &= \frac{1}{4} \big(\mathds{1}_4 + a_{Z} \sigma_Z\otimes\mathds{1}_2 + b_{Z} \mathds{1}_2\otimes\sigma_Z + \sum_{X,Y,Z} c_{ij} \sigma_i\otimes\sigma_j \big)\\
    =&\frac{1}{4}\scalebox{1}{$\begin{pmatrix}1+a_Z+b_Z+c_{ZZ} & c_{XZ}-ic_{YZ} &  c_{ZX}-ic_{ZY} &  c_{XX}-c_{YY}-ic_{XY}-ic_{YX}\\
    c_{XZ}+ic_{YZ} & 1+a_Z-b_Z-c_{ZZ} & c_{XX}+c_{YY}-ic_{XY}+ic_{YX} & -c_{ZX}+ic_{ZY}\\
    c_{ZX}+ic_{ZY} & c_{XX}+c_{YY}+ic_{XY}-ic_{YX} & 1-a_Z+b_Z-c_{ZZ} & -c_{XZ}+ic_{YZ}\\
    c_{XX}-c_{YY}+ic_{XY}+ic_{YX} & -c_{ZX}-ic_{ZY} & -c_{XZ}-ic_{YZ} & 1-a_Z-b_Z+c_{ZZ}\end{pmatrix}$},\notag
\end{align}
\end{widetext}

where $a_Z,b_Z\ge0$.

Consider first the case $a_Z+b_Z>1$.
We know that $c_{ZZ}\ge a_Z+b_Z-1=\|\mathbf{a}\|+\|\mathbf{b}\|-1$.
Therefore states satisfying $\|\mathbf{T}_{AB}\|=\|\mathbf{a}\|+\|\mathbf{b}\|-1$ are diagonal and thus separable.
Now, recall that negativity under partial transpose is a necessary and sufficient condition for two qubits to be entangled and let $\rho^{\Gamma}$ be the partial transpose of $\rho$.
Note that $|\rho_{12}|=|\rho_{34}|=|\rho_{21}|=|\rho_{43}|$ and $|\rho_{13}|=|\rho_{24}|=|\rho_{31}|=|\rho_{42}|$ and all remain unchanged under the partial transposition.
Hence $\rho_{11}\rho_{44}\ge|\rho_{14}|^2$ and $\rho_{22}\rho_{33}\ge|\rho_{23}|^2$ are necessary and sufficient for positivity after partial transpose, equivalently
\begin{align*}
    (1+c_{ZZ})^2-(a_Z+b_Z)^2&\ge (c_{XX}+c_{YY})^2+(c_{XY}-c_{YX})^2\\
    (1-c_{ZZ})^2-(a_Z-b_Z)^2&\ge (c_{XX}-c_{YY})^2+(c_{XY}+c_{YX})^2\ .
\end{align*}

Hence the stronger conditions
\begin{align*}
    (1+c_{ZZ})^2-(a_Z+b_Z)^2&\ge 2(c_{XX}^2+c_{YY}^2+c_{XY}^2+c_{YX}^2)\\
    (1-c_{ZZ})^2-(a_Z-b_Z)^2&\ge 2(c_{XX}^2+c_{YY}^2+c_{XY}^2+c_{YX}^2)\ ,
\end{align*}
are sufficient for separability and finally also
\begin{align}
    \|\mathbf{T}_{AB}\|^2\le \tfrac{1}{2}\pm c_{ZZ} +\tfrac{3}{2}c_{ZZ}^2-\tfrac{1}{2}(a_Z\pm b_Z)^2
\end{align}
for both signs.
Choosing $c_{ZZ}$ negative we can just take the positive sign. 
Minimizing over $c_{ZZ}$ gives $c_{ZZ}=-1/3$ for $a_Z+b_Z\le2/3$ and $c_{ZZ}=a_Z+b_Z-1$ for $a_Z+b_Z>2/3$, resulting in $\|\mathbf{T}_{AB}\|\le\sqrt{\tfrac{1}{6}(2-3(\|\mathbf{a}\|+\|\mathbf{b}\|)^2)}$ and $\|\mathbf{T}_{AB}\|\le1-\|\mathbf{a}\|-\|\mathbf{b}\|$ respectively.

To see that this is indeed the optimal bound, choose $c_{ZZ}=a_Z+b_Z-1$ and $c_{XX}=c_{YY}=\epsilon/2$. This state is entangled for every $\epsilon\ge0$.
For the case $a_Z+b_Z\le2/3$ take $c_{ZZ}=1/3$ and $c_{XX}=c_{YY}=1/9 -(a_Z+b_Z)^2/4+\epsilon\le1/9 -(a_Z-b_Z)^2/4$, which is an entangled state for all $\epsilon>0$.

\end{document}